\documentclass[journal,onecolumn]{IEEEtranTIE}

\usepackage{empheq}

\usepackage{amsfonts}
\usepackage{amsmath}
\usepackage{amssymb}
\usepackage{graphicx}
\usepackage{float}
\usepackage[mathscr]{eucal}
\interdisplaylinepenalty=10000
\usepackage{amsfonts}
\usepackage{float}
\usepackage[colorlinks,linkcolor=blue]{hyperref}
\usepackage[latin1]{inputenc}
\usepackage{amsmath}
\usepackage{graphicx}
\usepackage{amssymb}
\usepackage{amsthm}
\usepackage{verbatim}
\usepackage{epsfig}
\usepackage[labelfont=bf]{caption}
\usepackage{enumerate}
\usepackage{subfigure}
\usepackage{epstopdf}

\newtheorem{lemma}{Lemma}
\newtheorem{proposition}{Proposition}
\newtheorem{corollary}{Corollary}
\newtheorem{fact}{Fact}
\newtheorem{definition}{Definition}
\newtheorem{remark}{Remark}

\newtheorem{assumption}{Assumption}


\def\begcen{\begin{center}}
\def\endcen{\end{center}}

\def\begdef{\begin{definition}}
\def\enddef{\end{definition}}






\newcommand{\col}{ \mbox{col} }

\def\caly{{\cal Y}}
\def\calg{{\cal G}}

\def\calw{{\cal W}}

\def\calf{{\cal F}}
\def\cald{{\cal D}}



\def\liminf{\lim_{t \to \infty}}

\def\L2{{\cal L}_2}
\def\L2e{{\cal L}_{2e}}

\def\rea{\mathbb{R}}

\def\adj{\mbox{adj}}


\def\begmat#1{\begin{bmatrix}#1\end{bmatrix}}
\def\begali#1{\begin{align}{#1}\end{align}}
\def\begalis#1{\begin{align*}{#1}\end{align*}}

\def\begequarr{\begin{eqnarray}}
\def\endequarr{\end{eqnarray}}
\def\begequarrs{\begin{eqnarray*}}
\def\endequarrs{\end{eqnarray*}}
\def\begarr{\begin{array}}
\def\endarr{\end{array}}
\def\begequ{\begin{equation}}
\def\endequ{\end{equation}}
\def\lab{\label}
\def\begdes{\begin{description}}
\def\enddes{\end{description}}
\def\begenu{\begin{enumerate}}
\def\begite{\begin{itemize}}
\def\endite{\end{itemize}}
\def\endenu{\end{enumerate}}

\def\lef[{\left[\begin{array}}
\def\rig]{\end{array}\right]}

\def\begcen{\begin{center}}
\def\endcen{\end{center}}
\def\begrem{\begin{remark}\rm}
\def\endrem{\end{remark}}
\def\begassum{\begin{assumption}}
\def\endassum{\end{assumption}}
\def\begassums{\begin{assumption*}}
\def\endassums{\end{assumption*}}
\def\begassu{\begin{ass}}
\def\endassu{\end{ass}}
\def\beglem{\begin{lemma}}
\def\endlem{\end{lemma}}
\def\begcor{\begin{corollary}}
\def\endcor{\end{corollary}}
\def\begfac{\begin{fact}}
\def\endfac{\end{fact}}
\def\TAC{{\it IEEE Trans. Automat. Contr.}}
\def\AUT{{\it Automatica}}

\def\SCL{{\it Systems and Control Letters}}



\def\liminf{\lim_{t \to \infty}}

\def\L2e{{\cal L}_{2e}}

\def\rea{\mathbb{R}}

\def\adj{\mbox{adj}}
\def\col{\mbox{col}}

\def\et{\varepsilon_t}



\def\TAC{{\it IEEE Trans. Automatic Control}}
\def\TIE{{\it IEEE Trans. Industrial Electronics}}

\def\SCL{{\it Systems \& Control Letters}}
\def\AUT{{\it Automatica}}

\def\TPE{{\it IEEE Trans. on Power Electronics}}


\newcommand{\tarc}{\mbox{$\frown$}}
\newcommand{\arc}[1]{\stackrel{\tarc}{#1}}

\def\begsubequ{\begin{subequations}}
\def\endsubequ{\end{subequations}}



\usepackage{xcolor}

\graphicspath{{figs/}{img/}}

\begin{document}
	\title{	On-line Parameter Estimation of the Polarization Curve of a Fuel Cell with Guaranteed Convergence Properties: Theoretical and Experimental Results}
	
	\author{
		\vskip 1em
		
		Carlo Beltr\'an, 
		Alexey Bobtsov,
		Romeo Ortega,
		Diego Langarica-C\'ordoba,\\
		Rafael Cisneros, 
		Luis H. D\'iaz-Saldierna
		
		\thanks{
			
			
			C. Beltr\'an and D. Langarica are with the Faculty of Sciences, Autonomous
			University of San Luis Potosí (UASLP), San Luis Potos\'i, 78295, Mexico (e-mails: carlo.beltran@ieee.org, diego.langarica@uaslp.mx). 
			
			A. Bobstov is with the Department of Control Systems and Robotics, ITMO University, 197101, Saint Petersburg, Russia (e-mail: bobtsov@mail.ru). 
			
			R. Ortega and R. Cisneros are with the Electrical and Electronic Engineering Department, Instituto Tecnol\'ogico Aut\'onomo de M\'exico (ITAM), Mexico City, 01080, Mexico (e-mails:romeo.ortega@itam.mx, rcisneros@itam.mx).
			
			L. D\'iaz is with the Instituto Potosino de Investigaci\'on Cient\'ifica y Tecnol\'ogica,  San Luis Potos\'i, 78216, Mexico (e-mail: ldiaz@ipicyt.edu.mx).
		}
	}
	\maketitle
\begin{abstract}
	In this paper we address the problem of online parameter estimation of  a Proton Exchange Membrane Fuel Cell (PEMFC) polarization curve, that is, the static relation between the voltage and the current of the PEMFC. The task of designing this estimator---even off-line---is complicated by the fact that the uncertain parameters enter the curve in a highly nonlinear fashion, namely in the form of {\em nonseparable} nonlinearities. We consider several scenarios for the model of the polarization curve, starting from the standard {\em full model} and including several popular {\em simplifications} to this complicated mathematical function. In all cases, we derive {\em separable} regression equations---either linearly or nonlinearly parameterized---which are instrumental for the implementation of the parameter estimators. We concentrate our attention in  {\em on-line} estimation schemes for which, under suitable excitation conditions, global {\em parameter convergence} is ensured.  Due to this global convergence properties the estimators are robust to unavoidable additive noise and structural uncertainty. Moreover, their on-line nature endow the schemes with the ability to track (slow) parameter variations, that occur during the operation of the PEMFC. These two features---unavailable in time-consuming off-line data-fitting procedures---make the proposed estimators  helpful for an on-line time-saving characterization of a given PEMFC, and the implementation of fault-detection procedures and model-based adaptive control strategies.   Simulation and {\em experimental} results that validate the theoretical claims are presented.  
\end{abstract}
\begin{IEEEkeywords}
Proton Exchange Membrane Fuel Cell, Online estimation, Parameter identification
\end{IEEEkeywords}
\markboth{IEEE TRANSACTIONS ON INDUSTRIAL ELECTRONICS}%
{}

\definecolor{limegreen}{rgb}{0.2, 0.8, 0.2}
\definecolor{forestgreen}{rgb}{0.13, 0.55, 0.13}
\definecolor{greenhtml}{rgb}{0.0, 0.5, 0.0}

%
\section{Introduction}
\lab{sec1}
%
Proton Exchange Membrane Fuel Cells (PEMFCs) have emerged as a promising and scalable energy conversion and sustainable power generation technology. As a subset of fuel cell technology, PEMFCs offer a variety of unique features that make them suitable for diverse applications across various industries such as electrical transportation and power generation. Among others, energy efficiency, rapid start-up, compact size and weight, low noise, low emission and modularity characteristics have widespread this type of fuel cells \cite{alyakhni2021comprehensive}. However, the determination of the state of health of a functioning Proton Exchange Membrane Fuel Cell (PEMFC), the optimization of the operating conditions and its energy management are topics of practical importance \cite{kim2016onlineapplicable}.  Usually such investigations require complicated experiments and instrumentation that cannot be applied in field operations; therefore, time-consuming off-line data-fitting procedures are employed. In recent years mathematical models that describe the operation of PEMFC  under different operating conditions have been developed. Due to the complexity of the behavior of the PEMFC these models involve highly nonlinear relations that depend, again in a nonlinear way, on uncertain parameters.  The values of these parameters, which reflect the actual PEMFC performance, inevitably affect effectiveness of the developed models in simulation, design, optimal operation and control. Therefore, it is indispensable to investigate the problem of {\em parameter estimation} of these models. Several  unwanted phenomena, such as catalyst poisoning, flooding or drying of the membrane etc. can be monitored, and the working parameters of an PEMFC can be quickly regulated with the help of real time parameter estimation \cite{debeer2015online}---hence our interest in {\em on-line} parameter estimators. Also, to ensure good performance in the face of uncertainty and noise, we concentrate our attention on schemes for which we can prove {\em global convergence} properties under some suitable excitation conditions.

A particularly relevant practical model is the one describing the quasi-static behavior of the PEMFC, which is captured by the so-called {\em polarization curve} \cite{SHAetal}. This curve describes the behavior of the cell voltage, which is given by the equilibrium potential diminished by irreversible losses, often called polarizations, due
to three sources: (i) activation polarization, (ii) ohmic polarization, and (iii) concentration polarization. An universally accepted model of the polarization curve is given as \cite[Chapter 3]{LARDICbook}:
\begequ
\lab{polcur}
v_{fc}(i_{fc})=v_{rev}-v_{act}(i_{fc})-v_{ohm}(i_{fc})-v_{con}(i_{fc}),
\endequ
where $v_{fc}(t) \in \rea$ and $i_{fc}(t) \in \rea$ are the PEMFC voltage and current, respectively. The term $v_{rev}$ represents the reversible potential, which is not a function of the current and is given by the classical Nernst equation. Since $v_{rev}$ depends on the PEMFC temperature and partial pressure, its time scale is significantly slower than the other losses of electrical nature, and can be assumed to be constant. The term $v_{act}(i_{fc})$ is the activation polarization described using an empirical approximation of the Tafel equation. The term $v_{ohm}(i_{fc})$ captures the ohmic losses, and $v_{con}(i_{fc})$ is the concentration polarization. Classical models for the first two components are given as \cite[Eq. 3.14]{LARDICbook}, \cite[Subsection 6.2.1]{GUZSCIbook}:
\begsubequ
\lab{vacvohm}
\begali{
	\lab{vact}
	v_{act}(i_{fc}) & =\theta_4+\theta_1 \ln(i_{fc})\\
	\lab{vohm}
	v_{ohm}(i_{fc})& =\theta_2 i_{fc},
}
\endsubequ
while for the concentration term, there are two popular versions reported in the literature \cite[Equation (1)]{SHAetal}   and \cite[Equation (4)]{BELetal}: 
\begin{subequations}\label{vcon}
	\begin{empheq}[left={v_{con}(i_{fc})=\empheqlbrace\,}]{align}
		& \theta_5 e^{\theta_3 i_{fc}} \label{vcon1}\\
		& \theta_5(i_{fc})^{\theta_3}. \label{vcon2}
	\end{empheq}
\end{subequations}

As indicated in \cite{GUZ}, the model \eqref{vact} may be not valid for very small current densities, {\em i.e.}, where the influence of the activation polarization is dominant, and an empirical approximation of the Tafel equation may be preferred, for instance 
\begequ
\lab{vact1}
v_{act}(i_{fc})=\theta_4\left(1 - e^{\theta_1i_{fc}}\right).
\endequ 

All the parameters given in these functions---namely $\theta:=\col(\theta_1,\theta_2,\theta_3,\theta_4,\theta_5)$---are positive and {\em uncertain} and, in particular, in the concentration term \eqref{vcon} the unknown parameter enters in an exponential form, leading to a {\em nonseparable nonlinear parameterization} (NSNLP)---that is, a nonlinear parameterization that cannot be written as $\phi^\top(\cdot)\calw(\cdot)$, where the function $\phi$ depends on measurable signals and is independent of the {\em unknown parameter}, which appear only in the function $\calw$.\footnote{Notice that if we adopt for the activation term the expression \eqref{vact1} instead of \eqref{vact}, a second nonseparable parameterization is added. Since dealing with two nonseparable nonlinearities is still an open problem, this model is not considered in the paper.}  The task of estimating on-line the parameters of systems with NSNLP is essentially open, making the solution of the present problem unattainable with existing parameter estimation techniques. One of the contributions of this paper is the development of a procedure to transform the NSNLP into a {\em separable} one, to which we can apply some new parameter estimation techniques. In particular, the least-squares plus dynamic regression equation and mixing (LSD) estimator recently reported in \cite{ORTROMARA}, which can deal with {\em  separable nonlinear} parameterizations (SNLP). In the case, when the parameterization is linear we can apply simpler least-squares of gradient algorithms \cite{SASBODbook,TAObook}\\

In the interest of investigating the accuracy of the approximate models we consider in the paper four descriptions of the polarization curve:
\begenu[{\bf M1}]
\item The full model \eqref{polcur}, \eqref{vacvohm} with \eqref{vcon1} for the concentration term. 
\item Invoking some practical considerations discussed for instance in  \cite{BELetal,LEELALAPP}, we concentrate on $v_{con}(i_{fc})$ and assume that the other terms can be approximated by a {\em known} constant $E_{oc}$, which is the voltage at which the linearized curve crosses the $y$ axis at the no-current state, leading to  the simplified  model  
\begequ
\lab{m2}
v_{fc}(i_{fc})=E_{oc}-a e^{b i_{fc}},
\endequ
where $a>0$ and $b>0$ are some {\em unknown} parameters.
\item Adopting the same arguments invoked above, but taking \eqref{vcon2} for the concentration term, leads to
\begequ
\lab{m3}
v_{fc}(i_{fc})=E_{oc}-a \big(i_{fc}\big)^{b}.
\endequ
\item Considering that $\theta_3$ in \eqref{vcon} is a small number---compared with the other quantities---we assume it is {\em equal to zero} and define the function
\begequ
\label{m4}
v_{fc}(i_{fc})=\theta_6+\theta_1 \ln(i_{fc})+\theta_2 i_{fc},
\endequ 
where we defined $\theta_6:=\theta_1+\theta_5$.
\endenu

In models {\bf M2} and {\bf M3} we take into account  the well-known fact that  

\begin{subequations}\label{eoc}
	\begin{empheq}[left={E_{oc}>\empheqlbrace\,}]{align}
		& a e^{b i_{fc}} \label{eoc1}\\
		& a(i_{fc})^{b}. \label{eoc2}
	\end{empheq}
\end{subequations}

The design procedure that we follow to solve the parameter estimation problem consists of four steps.
\begenu[{\bf S1}]
\item Proceeding from one of the NSNLP mapping given in {\bf M1}-{\bf M3}, construct a new SNLP of the form 
\begequ
\lab{nlpre}
Y(t)=\phi^\top(t)\calw(\cdot),
\endequ
where the functions $Y(t)\in \rea$ and $\phi(t) \in \rea^p$ are measurable and $\calw:\rea^q \to \rea^p$  is a known {\em nonlinear} mapping whose argument is a $q$-dimensional vector containing the unknown parameters. Notice that for the case of {\bf M4} this step can be obviated, since the parameterization is already separable and, moreover, {\em linear}. 
\item Propose a parameter estimator for the SNLP \eqref{nlpre}. Due to the more relaxed signal excitation requirements, we implemented the LSD estimator of \cite{ORTROMARA} in both simulation and experimental results. However, for a linear regression, the well-known gradient descent algorithm  \cite{SASBODbook} can be alternatively implemented.
\item In the LSD estimator, a condition is imposed on the nonlinear mapping $\calw$ of the SNLP, namely that it is a {\em monotonizable} nonlinearity.  As explained in \cite{PAVetal}, a necessary and sufficient condition to ensure this property is the feasibility of an easily verifiable {\em linear matrix inequality} (LMI). 
\item Graphically compare the polarization curve resulting from the estimator using models {\bf M2}-{\bf M4} with the ``true'' curve. In the simulation  results  this curve is that of \textbf{M1} whereas in the experimental results it corresponds to the curve, experimentally obtained from the PEMFC system in our test bench.
\endenu

A key feature of the LSD estimator is that it ensures GEC imposing an {\em interval excitation} (IE) assumption \cite{TAObook} on the regressor  $\phi$ of \eqref{nlpre}, which is a considerable milder assumption compared with the classical {\em persistency of excitation} condition \cite{SASBODbook,TAObook} imposed in standard gradient or least-squares estimators. In spite of this fact, we have to point out that the estimation of the parameters of the full model {\bf M1} was {\em unsuccessful}. Indeed, in spite of the selection of ``highly exciting" input signals $i_{fc}$ it was not possible to meet the IE requirement of the LSD estimator for parameter convergence. As discussed in the paper, this major drawback is due to the fact that some of the elements of the regressor $\phi$ of the SNLP are {\em linearly dependent}. Interestingly, this phenomenon is absent in the approximate models  {\bf M2}-{\bf M4}, for which {\em global exponential convergence} (GEC), with all signals remaining bounded, is indeed ensured.  

In spite of our inability to consistently estimate the parameters of the full model {\bf M1}, we believe the derivation of an SNLP for this model is an important contribution to the field. Indeed, to the best of the authors' knowledge, this is the first time a GEC on-line estimator of the parameters of realistic models of the polarization curve is reported. The field being dominated by  standard least squares and extended Kalman filters \cite{ghaderi2019simultaneously,barragan2020iterative}, based on linear approximations of the nonlinear model, with no guarantee of parameter convergence. See \cite{KANetal} for a recent survey on the topic. \\

The remainder of the paper is organized as follows. In Section \ref{sec2} we derive the SNLP for the full model {\bf M1} and discuss the issue of lack of excitation of the associated regressor. Unfortunately, this derivation requires the measurement of the {\em time derivative} of $i_{fc}$. In Section \ref{sec3} we consider the simplified models \eqref{m2}-\eqref{m4}. The details of the proposed LSD and gradient estimators are given in Section \ref{sec4}. Simulation and experimental results that illustrate the theoretical developments are given in  Section \ref{sec5} and \ref{sec6}, respectively. We wrap-up the paper with concluding remarks and future research in Section \ref{sec7}. In Appendix \ref{appa} we derive a SNLP for the full model {\bf M1} in the particular case when the current is a {\em sinusoidal} signal, which is of interest in an off-line scenario. Interestingly, in these derivations we {\em obviate} the need for the measurement of the derivative of $i_{fc}$. To simplify the reading, in Appendix \ref{appb} we give a list of acronyms. \\

\noindent {\bf Notation.}  $I_n$ is the $n \times n$ identity matrix  For a column vector $a=\col(a_1,a_2,\dots,a_n) \in \rea^n$, we denote $|a|^2:=a^\top a$ and define $a_{i,j}:=\col(a_i,a_{i+1},\dots,a_j)$, for $i,j \in \{1,\dots,n\}$, with $n\geq j>i$. The action of a linear time-invariant (LTI) filter $\calf(p) \in \rea(p)$, with $p$ the derivative operator  $p^n(w)=:{d^n w(t)\over dt^n}$, on a signal $w(t)$ is denoted as $\calf(p)(w)$.  To simplify the notation, the arguments of all functions and mappings are written only when they are first defined and are omitted in the sequel.
%
\section{Derivation of a SNLP for the Full Model M1}
\lab{sec2}
%
In this section we derive a SNLP for the full the polarization curve \eqref{polcur}, \eqref{vacvohm} and \eqref{vcon1}---called {\bf M1} in Section \ref{sec1}. Unfortunately, this parameterization {\em does not} contain all of the paramters $\theta$, but only $\theta_{1,4}$. To estimate the remaining parameter $\theta_5$ it is necessary to invoke a certainty-equivalence argument to compute, using the estimate of the other parameters, an estimate of it via an elementary algebraic operation, whose description is also included here. As indicated in Section \ref{sec1}, unfortunately, the regressor of the SNLP derived here contains elements which are {\em linearly} dependent. Consequently, it is not possible to achieve the desired parameter convergence. This delicate issue is also discussed here. It may be argued that including a parameterization for which convergence is unattainable is a futile exercise, however, we believe it is of academic interest to illustrate with an example the potential appearance of the undesired linear dependence phenomenon, an issue that is often overlooked in the literature.    
\subsection{Derivation of a SNLP for $\theta_{1,4}$}
\lab{subsec21}
%
To simplify the notation we introduce the definitions
\begequ
\lab{yu}
y:=v_{fc},\;u:=i_{fc},
\endequ
and rewrite the polarization curve \eqref{polcur}, \eqref{vacvohm} and \eqref{vcon1} as
\begequ
\lab{yu1}
y=\theta_4+\theta_1 \ln(u)+\theta_2 u + \theta_5 e^{\theta_3 u}.
\endequ
To derive the SNLP we make the following assumption.\\

\noindent {\bf A1} Besides the measurable signals $u$ and $y$ it is possible to reconstruct the signal $\dot u$.\\

\begin{lemma}\em
\lab{lem1}
Consider the model of the fuel cell voltage-current static characteristic given in \eqref{yu}, \eqref{yu1} satisfying Assumption {\bf A1}. There exists measurable signals $Y_{M1}(t)\in \rea,\;\phi_{M1}(t) \in \rea^5$ such that the following SNLP holds:
\begin{equation}
\label{nlpre1}
Y_{M1}(t)= \phi_{M1}^\top(t) \calw(\theta_{1,4})+\et,\quad \forall t \geq 0,
\end{equation}
where we defined the nonlinear mapping $\calw:\rea^4 \to \rea^5$
\begequ
\lab{calw}
\calw(\theta_{1,4}):=\col(\theta_1,\theta_3,\theta_1\theta_3,\theta_2\theta_3,\theta_2-\theta_3\theta_4),
\endequ
and $\et$ is a signal exponentially decaying to zero due to the action of the various filters.\footnote{It has been shown in \cite[Lemma 1]{ARAetalmicnon} that the presence of the exponentially decaying term $\et$ does not affect the validity of our asymptotic behavior claims, hence it is discarded in the future.}
\end{lemma}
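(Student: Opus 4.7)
The plan is to eliminate the nonseparable exponential term $\theta_5 e^{\theta_3 u}$ by differentiating the static relation once in time and then substituting back an algebraic expression for $\theta_5 e^{\theta_3 u}$ obtained from the original equation \eqref{yu1}. Concretely, differentiating \eqref{yu1} with respect to $t$ yields
\begin{equation*}
\dot y \;=\; \theta_1\,\frac{\dot u}{u} \,+\, \theta_2\,\dot u \,+\, \theta_3\,\theta_5 e^{\theta_3 u}\,\dot u.
\end{equation*}
Since \eqref{yu1} gives $\theta_5 e^{\theta_3 u} = y-\theta_4-\theta_1\ln(u)-\theta_2 u$, substituting this expression eliminates the nonseparable exponential, at the price of introducing the cross-products $\theta_1\theta_3$, $\theta_2\theta_3$ and the combination $\theta_2-\theta_3\theta_4$. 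Rearranging the resulting identity then gives
\begin{equation*}
\dot y \;=\; \theta_1\!\left(\tfrac{\dot u}{u}\right) + \theta_3(\dot u\, y) - \theta_1\theta_3(\dot u\,\ln u) - \theta_2\theta_3(\dot u\, u) + (\theta_2-\theta_3\theta_4)\,\dot u,
\end{equation*}
which is already in the SNLP form \eqref{nlpre1} with the entries of $\calw(\theta_{1,4})$ ordered exactly as in \eqref{calw}, and with the regressor components $\tfrac{\dot u}{u}$, $\dot u\,y$, $-\dot u\,\ln u$, $-\dot u\,u$, $\dot u$.

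The remaining difficulty is that $\dot y$ is not assumed measurable, whereas by Assumption \textbf{A1} only $u,\dot u,y$ are available. The standard way around this is to pass the whole identity through a stable, proper LTI filter $\calf(p) = \frac{\lambda}{p+\lambda}$ with $\lambda>0$, so that $\calf(p)(\dot y) = \frac{\lambda p}{p+\lambda}(y)$ becomes a realizable operation on the measurable signal $y$, and the right-hand side terms, being products of measurable signals, are simply low-pass filtered. I would then define $Y_{M1}$ as $\calf(p)(\dot y)$ and $\phi_{M1}$ as the componentwise filtered version of the regressor above. The mismatch introduced by the filter initial conditions is precisely the exponentially decaying term $\et$ in \eqref{nlpre1}, which is harmless by the footnote invoking \cite[Lemma 1]{ARAetalmicnon}.

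The main obstacle is conceptual rather than computational: one must choose the manipulation carefully so that (i) the nonseparable term $\theta_5 e^{\theta_3 u}$ disappears after a single differentiation and substitution (more than one differentiation would demand higher-order derivatives of $u$ and amplify noise), and (ii) the resulting product monomials in the $\theta_i$ collapse exactly into the five-dimensional vector $\calw$ given in \eqref{calw}. A minor technical point is that the regressor component $\dot u/u$ requires $u$ to stay away from zero; this is not restrictive in practice because the PEMFC is operated at a strictly positive current. Note that this parameterization intrinsically loses the parameter $\theta_5$---it is absorbed into $\theta_5 e^{\theta_3 u}$ before substitution and so never appears in $\calw$---which is the reason the paper announces a subsequent certainty-equivalence step to recover $\theta_5$ algebraically.
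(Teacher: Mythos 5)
Your proof is correct and follows essentially the same route as the paper's: differentiate \eqref{yu1}, substitute $\theta_5 e^{\theta_3 u}=y-\theta_4-\theta_1\ln(u)-\theta_2 u$ back from \eqref{yu1} to kill the nonseparable term, and pass the identity through $\calf(p)={\lambda\over p+\lambda}$ so that $Y_{M1}={\lambda p\over p+\lambda}(y)$ and the filtered products form $\phi_{M1}$, with the filter transients absorbed into $\et$. The only cosmetic difference is that the paper first rewrites $\dot u/u$, $\dot u\,u$ and $\dot u$ as the exact derivatives $p(\ln u)$, $\tfrac{1}{2}p(u^2)$ and $p(u)$, so the corresponding regressor entries become ${\lambda p\over p+\lambda}(\ln u)$, $-\tfrac{1}{2}{\lambda p\over p+\lambda}(u^2)$ and ${\lambda p\over p+\lambda}(u)$, which require no access to $\dot u$ at all---equivalent to your filtered components up to an exponentially decaying term.
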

\begin{proof}
Differentiating \eqref{yu1} we get
\begalis{
\dot y & = \theta_1 p(\ln(u))+\theta_2 p(u) + \theta_3 \dot u \left(\theta_5 e^{\theta_3 u}\right)\\
& = \theta_1 p(\ln(u))+\theta_2 p(u) + \theta_3 \dot u \left[y -\theta_4- \theta_1 \ln(u)- \theta_2 u \right]\\
& = \theta_1 p(\ln(u))+(\theta_2- \theta_3 \theta_4)p(u) + \theta_3 \dot u \left[y - \theta_1 \ln(u)- \theta_2 u \right],
}
where we used \eqref{yu1} again to get the second identity.  Let us apply now to this equation the LTI filter
\begequ
\lab{calf}
\calf(p)={\lambda \over p+\lambda},
\endequ
where $\lambda>0$ is a tuning parameter, yielding
\begalis{
{\lambda p \over p+\lambda} (y) & = \theta_1 {\lambda p \over p+\lambda}(\ln(u))+( \theta_2 - \theta_3\theta_4){\lambda p \over p+\lambda}(u) +\theta_3 {\lambda \over p+\lambda}(\dot u y)-{\theta_2\theta_3 \over 2} {\lambda p\over p+\lambda}(u^2)- \theta_1\theta_3{\lambda \over p+\lambda}(\dot u \ln(u)) ,
}
where we have used the fact that $p(u^2)=2 u \dot u$ to obtain the fourth right hand term in the first identity. The proof is completed defining the regressor vector.
\begequ
\lab{phim1}
\phi_{M1}:=\col\left({\lambda p \over p+\lambda}(\ln(u)), {\lambda \over p+\lambda}(\dot u y), -{\lambda \over p+\lambda}(\dot u \ln(u)), -{1 \over 2} {\lambda p\over p+\lambda}(u^2), {\lambda p \over p+\lambda}(u)\right),
\endequ
and denoting 
\begali{
\lab{ym1}
Y_{M1} & :={\lambda p\over p+\lambda}(y).
}
\end{proof}
\subsection{Discussion}
\lab{subsec22}
%
\noindent {\bf D1} We make the important observation that the regression equation \eqref{nlpre} is nonlinear with respect to the physical parameters $\theta_{1,4}$. Although it is possible to {\em overparameterize} the SNLP to obtain a linear one, it is well-known that this procedure has serious deleterious effects---see \cite{ORTetalaut21} for a detailed discussion on this important issue. As explained in the Introduction, SNLP can be tackled invoking the LSD estimator of \cite{ORTROMARA}, provided that they verify a {\em monotonizability} condition \cite[Assumption A1]{ORTROMARA}. This issue is discussed in Subsection \ref{subsec23} below. \\
   
\noindent {\bf D2} Notice that the parameter $\theta_5$ does not appear in the SNLP. It turns out that, as explained in Lemma \ref{lem3} below, the generation of an estimate is a trivial task with an elementary algebraic computation.\\
   
\noindent {\bf D3} A potential drawback of the result above is the need to reconstruct the signal $\dot u$. In some practical applications the PEMFC current acts as a control signal \cite{XINetal}, and it is often designed including an integral action. In this case,  $\dot u$ is known. Another practical solution is to approximate it using a dirty derivative filter ${p \over \tau p+1}$ that, with small values of the parameter $\tau>0$, gives a precise reconstruction of the signal derivative. Moreover, in Appendix A we show that this assumption can be {\em dispensed} when the PEMFC current is of the form $u=A \sin (\omega t)$, with $A$ and $\omega$ {\em unknown}, which is a scenario that appears in {\em off-line} estimation tasks.\\

\noindent {\bf D4}  We underscore the---often overlooked---fact that the filtering action $z={\lambda p \over p+\lambda}(v)$, which appears in the definition of the regressor \eqref{phim1} and in \eqref{ym1}, can be implemented {\em without differentiation} via the realization 
\begalis{
\dot x&=-\lambda(x+\lambda v)\\
z&=x+\lambda v
}
%
\subsection{Construction of a strictly monotonic mapping}
\lab{subsec23}
%
In this subsection we construct a new mapping verifying the monotonicty condition required by the LSD algorithm \cite{ORTROMARA}.

\begin{lemma}
\lab{lem2}\em
Consider the mapping $\calw(\theta_{1,4})$ given in  \eqref{calw}. Define the injective mapping
$$
\theta_{1,4}=\cald(\eta):=\begmat{\eta_1 \\ {\eta_3 \over \eta_2} \\  \eta_2 \\  {1 \over \eta_2}\Big( {\eta_3 \over \eta_2}-\eta_4 \Big)}, 
$$
whose inverse is given as
$$
\eta=\begmat{\theta_1 \\ \theta_3 \\ \theta_2\theta_3 \\ \theta_2-\theta_3 \theta_4}.
$$
The composed mapping $\calg:\rea^4 \to \rea^5$  defined as
$$
\calg(\eta) :=\calw(\cald(\eta)).
$$
satisfies the LMI
\begequ
\lab{demcon1}
T\nabla \calg(\eta) + [\nabla \calg(\eta)]^\top T^\top \geq \rho I_4 >0,
\end{equation}
for the matrix
$$
T:=\begmat{1 & 0 & 0 & 0 & 0\\ 0 & 1 & 0 & 0 & 0\\  0 & 0 & 0 & 1 & 0\\  0 & 0 & 0 & 0 & 1}.
$$
Hence, the mapping $T\calg(\eta)$ is monotonic, as required in \cite[Assumption A1]{ORTROMARA}
\end{lemma}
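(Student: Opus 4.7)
The plan is to reduce the LMI verification to a direct computation by exploiting the fact that the change of coordinates $\cald$ has been engineered precisely to linearize all but one component of $\calw$, and that the matrix $T$ has been chosen to discard that remaining nonlinear component.

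First I would substitute $\cald(\eta)$ componentwise into $\calw$ to obtain an explicit expression for $\calg$. Using the relations $\theta_1=\eta_1$, $\theta_3=\eta_2$, $\theta_2=\eta_3/\eta_2$ and $\theta_2-\theta_3\theta_4=\eta_4$ induced by $\cald$, each entry of $\calw$ collapses to something simple: the first two entries become $\eta_1$ and $\eta_2$; the fourth entry $\theta_2\theta_3$ becomes $\eta_3$; and the fifth entry $\theta_2-\theta_3\theta_4$ becomes $\eta_4$. Only the third entry $\theta_1\theta_3=\eta_1\eta_2$ remains nonlinear in $\eta$. Hence
\[
\calg(\eta)=\col(\eta_1,\;\eta_2,\;\eta_1\eta_2,\;\eta_3,\;\eta_4).
\]

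Next I would compute the Jacobian $\nabla\calg(\eta)\in\rea^{5\times 4}$ by differentiating the five scalar components with respect to $\eta_1,\eta_2,\eta_3,\eta_4$. All rows are rows of $I_4$ except the third, which equals $(\eta_2,\eta_1,0,0)$. Now the role of $T$ becomes transparent: $T$ simply selects rows $1$, $2$, $4$, $5$ of any $5$-vector or $(5\times 4)$-matrix, which is exactly the complement of the offending third row. Consequently $T\nabla\calg(\eta)=I_4$ identically in $\eta$.

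The LMI \eqref{demcon1} then reads $I_4+I_4\geq \rho I_4$, which holds with $\rho=2>0$, establishing the claim and the monotonizability required by \cite[Assumption A1]{ORTROMARA}. There is no genuine obstacle: the only point that needs care is bookkeeping the conventions for the Jacobian (column/row dimensions) so that $T\nabla\calg+[\nabla\calg]^\top T^\top$ is a well-defined symmetric $4\times 4$ matrix, and noting that the inverse $\cald$ is indeed well defined on the open set $\{\eta_2\neq 0\}$, which is consistent with the physical assumption $\theta_3>0$ stated earlier in the paper.
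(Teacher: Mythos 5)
Your proof is correct and follows essentially the same route as the paper, which likewise just substitutes $\cald$ into $\calw$ to obtain $\calg(\eta)=\col(\eta_1,\eta_2,\eta_1\eta_2,\eta_3,\eta_4)$ and declares the LMI \eqref{demcon1} evident from this expression. Your extra detail---that $T$ deletes the only nonlinear row of the Jacobian so that $T\nabla\calg(\eta)=I_4$ and the LMI holds with $\rho=2$, on the set $\eta_2\neq 0$ where $\cald$ is defined---simply spells out what the paper leaves implicit.
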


\begin{proof}
The mapping $\calg(\eta)$  is given by
$$
\calg(\eta)=\begmat{\eta_1 \\ \eta_2 \\ \eta_1 \eta_2 \\ \eta_3 \\ \eta_4},
$$
from which it is clear that \eqref{demcon1} is satisfied. 
\end{proof}
\subsection{Algebraic parameter estimator of $\theta_{5}$}
\lab{subsec24}
%
As indicated above, the parameter $\theta_5$ {\em does not} appear in the SNLP, hence it has to be estimated separately. This is easy to be done using the estimated vector $\hat \theta_{1,4}$, and is described in the proposition below. 
\begin{lemma}\em
\lab{lem3}
Consider the polarization curve \eqref{yu}, \eqref{yu1} verifying Assumption {\bf A1}. Define the estimated parameter
\begali{
{\hat \theta}_5 &= e^{-\hat \theta_3 u} (y - \hat \theta_4-\hat \theta_1 \ln(u)-\hat \theta_2 u)
\lab{hatthe5}
} 
where the estimates $\hat \theta_{1,4}$ converge exponentially to their true values $\theta_{1,4}$. Then,
\begequ
\lab{expcon1}
\liminf |\tilde \theta_{5}(t)|=0\quad (exp),
\endequ
where $\tilde {(\cdot)}:=\hat {(\cdot)} - (\cdot)$  are the estimation errors.
\end{lemma}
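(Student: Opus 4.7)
The key observation is that \eqref{hatthe5} is precisely the algebraic inversion of the polarization curve \eqref{yu1} evaluated with estimated parameters. Indeed, solving \eqref{yu1} for $\theta_5$ yields
$$
\theta_5 = e^{-\theta_3 u}\bigl(y-\theta_4-\theta_1 \ln(u)-\theta_2 u\bigr),
$$
so, defining the smooth mapping
$$
F(\vartheta_{1,4};u,y):= e^{-\vartheta_3 u}\bigl(y-\vartheta_4-\vartheta_1 \ln(u)-\vartheta_2 u\bigr),
$$
the estimation error admits the compact representation
$$
\tilde\theta_5(t)=F(\hat\theta_{1,4}(t);u(t),y(t))-F(\theta_{1,4};u(t),y(t)).
$$
Hence my plan is to show that $F$ is Lipschitz in its first argument uniformly along the closed-loop trajectory, so that $|\tilde\theta_5(t)|$ inherits the exponential decay of $|\tilde\theta_{1,4}(t)|$.

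First, I would invoke the mean value theorem componentwise along the segment joining $\theta_{1,4}$ and $\hat\theta_{1,4}(t)$, yielding
$$
|\tilde\theta_5(t)|\le \sup_{\vartheta\in \mathcal{S}(t)} \bigl\|\nabla_\vartheta F(\vartheta;u(t),y(t))\bigr\|\cdot |\tilde\theta_{1,4}(t)|,
$$
where $\mathcal{S}(t)$ is that segment. The explicit partial derivatives
$$
\nabla_\vartheta F = -e^{-\vartheta_3 u}\begmat{\ln(u)\\ u\\ u\bigl(y-\vartheta_4-\vartheta_1 \ln(u)-\vartheta_2 u\bigr)\\ 1}
$$
are continuous functions of $u$, $y$ and $\vartheta$.

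Next, I would argue uniform boundedness of this gradient along trajectories. From the standing physical considerations, the PEMFC current $u=i_{fc}$ takes values in a compact set $[u_{\min},u_{\max}]$ with $u_{\min}>0$, so $\ln(u)$ and $u$ are bounded, and by the polarization law $y$ is likewise bounded. Because $\hat\theta_{1,4}(t)\to \theta_{1,4}$ exponentially, the estimates remain in a bounded neighborhood of the true parameters for all $t\ge 0$, so the segment $\mathcal{S}(t)$ stays in a fixed compact subset of $\rea^4$, rendering $e^{-\vartheta_3 u}$ and the bracketed term bounded uniformly in $t$. Consequently there exists a constant $C>0$ such that
$$
|\tilde\theta_5(t)|\le C\,|\tilde\theta_{1,4}(t)|,\quad \forall t\ge 0,
$$
and exponential convergence of $\tilde\theta_{1,4}$ transfers directly to $\tilde\theta_5$, establishing \eqref{expcon1}.

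The only delicate point is the uniform Lipschitz bound, which hinges on the standing boundedness of the physical signals $u,y$ and on $u$ being bounded away from zero (needed so that $\ln(u)$ does not blow up); both are benign assumptions for a PEMFC in normal operation, and together with the a priori exponential decay of $\tilde\theta_{1,4}$ they make the argument routine. Apart from this, the proof reduces to a one–line mean value estimate.
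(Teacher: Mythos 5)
Your proof is correct and rests on the same core idea as the paper's: writing $\hat\theta_5$ as the algebraic inversion of \eqref{yu1} evaluated at the estimates $\hat\theta_{1,4}$, and transferring their convergence to $\hat\theta_5$. The difference lies in how the transfer is justified. The paper simply rewrites \eqref{hatthe5} with $\hat\theta_i=\tilde\theta_i+\theta_i$ and passes to the limit as $\tilde\theta_{1,4}\to 0$---a purely qualitative continuity argument, which strictly speaking establishes only $\tilde\theta_5(t)\to 0$ and says nothing about the rate, even though the claim \eqref{expcon1} carries the tag \emph{(exp)}. Your mean-value estimate
$$
|\tilde\theta_5(t)|\le C\,|\tilde\theta_{1,4}(t)|,\qquad \forall t\ge 0,
$$
is exactly the missing quantitative step: it shows the exponential rate is inherited, at the (benign) cost of making explicit the standing hypotheses the paper leaves tacit---namely $u$ confined to a compact subset of $(0,\infty)$ so that $\ln(u)$ and $e^{-\vartheta_3 u}$ stay uniformly bounded, $y$ bounded via \eqref{yu1}, and the estimate trajectory contained in a fixed compact set, which the assumed exponential convergence of $\hat\theta_{1,4}$ indeed guarantees. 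Your gradient computation is correct, so the uniform Lipschitz bound holds on the segment $\mathcal{S}(t)$ as claimed. In short, your route is a quantitative refinement of the paper's one-line limiting argument: it buys the actual exponential decay asserted in the lemma, while the paper's version buys brevity.
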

\begin{proof}
The proof follows trivially writing \eqref{hatthe5} in terms of the parameter errors $\tilde \theta_{1,4}$ as
$$
{\hat \theta}_5 = e^{-(\tilde \theta_3 +\theta_3) u} [y - (\tilde \theta_4 + \theta_4) -(\tilde \theta_1 +\theta_1) \ln(u)-(\tilde \theta_2 +\theta_2)u],
$$
and noting that
\begalis{
& \liminf \left\{e^{-(\tilde \theta_3(t) +\theta_3) u} [y - (\tilde \theta_4(t) + \theta_4) -(\tilde \theta_1(t)+\theta_1)\ln(u)-(\tilde \theta_2(t) +\theta_2)u]\right\}\\
& =e^{-\theta_3 u} [y - \theta_4 -\theta_1 \ln(u)-\theta_2)u] = \theta_5.
}
\end{proof}
\subsection{On the linear dependence of the regressor elements in \eqref{phim1}}
\lab{subsec25}
%

It can be numerically observed that the the elements of $\Phi_{M1}$ in \eqref{nlpre1}  exhibit linear dependence. This fact is  the major drawback of \eqref{nlpre1} since it makes $\mathcal{W}(\theta)$ non-identifiable. To evidence such a behaviour, we consider the model equation \textbf{M1} with $\theta_1=-2.582$, $\theta_2=-0.1808$, $\theta_3=0.0046$, $\theta_4=39.3543$ and $\theta_5=-1.2610$, and perform a numerical simulation---the corresponding polarization curve is shown in Fig. \ref{sim_lindep1}. We start it with a first test (Test 1) in which we consider a current signal $$u=25+5\cos(0.2\pi t)\mathrm{A}.$$ Fig. \ref{sim_lindep2} plots this signal and the resulting output voltage $y$. In order to prove linear dependence, the determinant of the Wronskian matrix of $\phi_{M1}$ is computed during the simulation time. Namely, we compute the determinant of the $5\times 5$ matrix
\begin{equation}\label{det}
W(\phi_{M1}(t))=\begin{bmatrix}\phi^\top_{M1}(t)\\p\Big(\phi^\top_{M1}(t)\Big)\\\vdots\\p^4 \Big(\phi^\top_{M1}(t)\Big)\end{bmatrix}.
\end{equation}
Simulation results of the $\mathrm{det}\{W(\phi_{M1}(t))\}$ are depicted in Fig. \ref{sim_lindep4}. Due to the filters' initial conditions, it can be noticed that there is a transient in which the determinant is non-zero. However, as fast as this transient vanishes, the determinant  is zero. Using the Wronskian test of linear dependence---see, for example, \cite[p.10]{poole}---it follows, from the zero determinant, that the elements of  $\Phi_{M1}$ are linearly related.

Afterwards, a new test (Test 2) is carried out in which we set an input current $$u=25+\frac{20}{\pi} \left[ \sin(0.2\pi t)  + \frac{1}{3}\sin(0.6 \pi t) +  \frac{1}{5}\sin(\pi  t)  \right]\mathrm{A}.$$ That is, compared with Test 1, we have included two more sine signals in $u$---more precisely, to synthesized $u$, we add a constant to the three first terms of the Fourier series decomposition of a pulse train. The pair of signals $u$ and $y$ of Test 2 is shown in Fig. \ref{sim_lindep2}. The five components of $\Phi_{M1}$ have been plotted in Fig. \ref{sim_lindep3}. It is immediately noticeable that all the curves in the figure exhibit the same shape. Thus, it is possible to get one curve of the plot from other by an appropriate scaling factor---that is, the elements in $\phi_{M1}$ are linearly dependent as in Test 1.

\begin{figure*}[t!]
	\centering
    \begin{subfigure}[Polarization curve using \textbf{M1}.]
		{\includegraphics[width=0.49\linewidth]{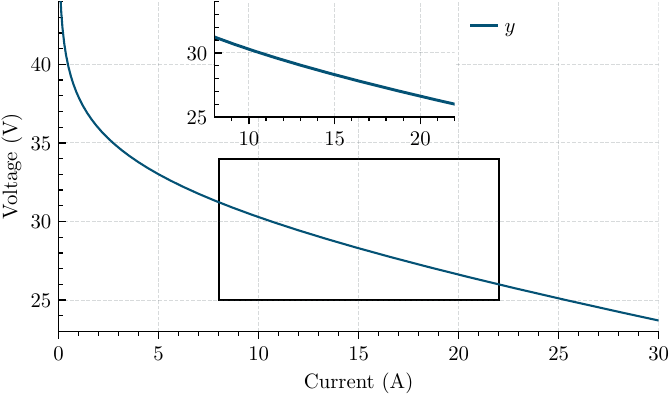}\label{sim_lindep1}}
	\end{subfigure}\hfill
	\begin{subfigure}[$u$ and $y$ obtained from Test 1 and Test 2.]
		{\includegraphics[width=0.49\linewidth]{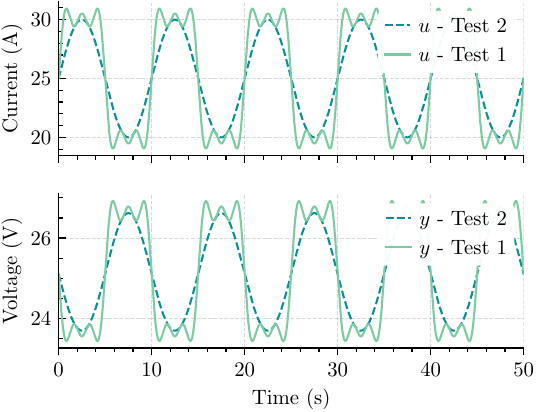}\label{sim_lindep2}}
	\end{subfigure}\\
 \begin{subfigure}[$\mathrm{det}\{W(\cdot)\}$.]
		{\includegraphics[width=0.49\linewidth]{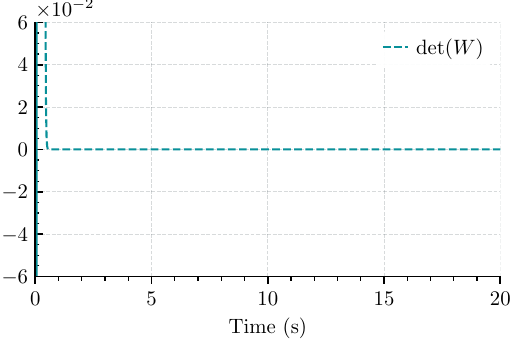}\label{sim_lindep4}}
	\end{subfigure}\hfill
    \begin{subfigure}[Elements of vector $\phi_{M1}\in\mathbb{R}^5$.]
		{\includegraphics[width=0.49\linewidth]{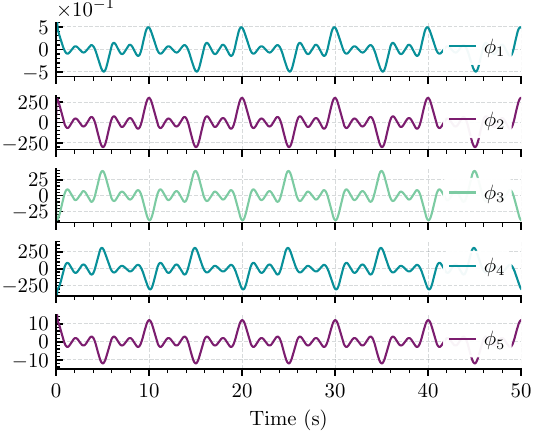}\label{sim_lindep3}}
	\end{subfigure}
	\caption{Simulation result of the linear dendence of $\phi_{M1}(t)$. }\label{sim_results}
\end{figure*}

%
\section{Separable Parameterizations for the Reduced Models M2-M3}
\lab{sec3}
%
In this section we derive separable parameterizations for the approximate models {\bf M2}-{\bf M3} of the polarization curve, which are given by the NSNLP \eqref{m2}, \eqref{m3} and \eqref{m4}, respectively. Interestingly, for the model {\bf M2} we will  derive a {\em linear} regression equation (LRE) that contains {\em both} uncertain parameters $a$ and $b$. Moreover, the construction does not involve any dynamic extension and is carried out with elementary {\em algebraic} operations. The derivation for the model {\bf M3} is also done with simple algebraic manipulations and also leads to an LRE, but requires a second order dynamic extension and it estimates only $b$. Consequently, in contrast with the case of {\bf M2}, it requires the application of a certainty-equivalent based calculation for the parameter $a$. Finally, the model {\bf M4} is clearly already  in LRE form, and the details are given here only for ease of reference.  
\subsection{Reduced Model M2}
\lab{subsec31}
%
\begin{lemma}\em
\lab{lem4}
Consider the model of the fuel cell voltage-current static characteristic given in \eqref{m2} with {\em known} $E_{oc}$ satisfying \eqref{eoc1}. The following LRE holds
\begin{equation}
\label{lrem2}
Y_{M2}(t)= \phi_{M2}^\top(t) \begmat{\ln(a) \\ b},\quad \forall t \geq 0,
\end{equation}
where we defined 
\begali{
\lab{ym2}
Y_{M2} &:=\ln(E_{oc}-v_{fc})\\
\lab{phim2}
\phi_{M2} &:=\begmat{1 \\ i_{fc}}.
}	
\end{lemma}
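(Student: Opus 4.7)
The plan is to obtain the linear regression equation by a direct algebraic manipulation of \eqref{m2}, exploiting the bound \eqref{eoc1} to ensure that the required logarithm is well defined.

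First I would rearrange \eqref{m2} to isolate the exponential term, writing
\begin{equation*}
E_{oc}-v_{fc}=a\,e^{b\, i_{fc}}.
\end{equation*}
Because of assumption \eqref{eoc1}, the left hand side is strictly positive along every admissible trajectory, and since $a>0$ the right hand side is positive as well, so the natural logarithm can be applied to both sides without losing information.

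Next I would take $\ln(\cdot)$ of both sides and use the standard identities $\ln(xy)=\ln(x)+\ln(y)$ and $\ln(e^z)=z$ to obtain
\begin{equation*}
\ln(E_{oc}-v_{fc})=\ln(a)+b\, i_{fc}.
\end{equation*}
Identifying the left-hand side with $Y_{M2}$ as in \eqref{ym2} and recognising $\begmat{1\\ i_{fc}}$ as $\phi_{M2}$ in \eqref{phim2}, I can rewrite this identity in the compact vector form
\begin{equation*}
Y_{M2}(t)=\phi_{M2}^\top(t)\begmat{\ln(a)\\ b},\quad \forall t\geq 0,
\end{equation*}
which is exactly \eqref{lrem2} and completes the argument.

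There is essentially no hard step in this derivation; the only non-trivial point is that one must invoke \eqref{eoc1} to guarantee positivity of the argument of the logarithm, a condition that is built into the model assumptions. I would highlight this fact explicitly in the proof to make clear why the parameterization---in sharp contrast to the SNLP obtained for \textbf{M1} in Lemma \ref{lem1}---is not only separable but \emph{linear} in the transformed unknowns $\ln(a)$ and $b$, without requiring any dynamic extension, any filtering, or the measurement of $\dot u$.
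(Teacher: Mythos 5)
Your proof is correct and follows essentially the same route as the paper's: both isolate $E_{oc}-v_{fc}=a\,e^{b\,i_{fc}}$ and pass to logarithms (the paper merely phrases it as writing $a=e^{c}$ with $c:=\ln(a)$ before applying $Y_{M2}:=\ln(E_{oc}-v_{fc})$). Your explicit remark that \eqref{eoc1} guarantees positivity of the logarithm's argument is a point the paper leaves implicit, and it is a welcome addition.
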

\begin{proof}
Consider the function  \eqref{m2} which may be written as
\begalis{
E_{oc}-v_{fc} &  = a e^{b i_{fc}} = e^c e^{b i_{fc}} = e^{c+b i_{fc}}
}
where we defined the constant 
\begequ
\lab{c}
c:=\ln(a).
\endequ 
Using the definition \eqref{ym2} we can write the equation above as
$$
Y_{M2}= c+b i_{fc},
$$ 
that, together with \eqref{phim2} and \eqref{c} yields \eqref{lrem2}, completing the proof. 
\end{proof}
\subsection{Reduced Model M3}
\lab{subsec32}
%
 \begin{lemma}\em
\lab{lem5}
Consider the model of the fuel cell voltage-current static characteristic given in \eqref{m3} with {\em known} $E_{oc}$ satisfying \eqref{eoc1}. Define the following dynamic extension
\begali{
\nonumber
\dot x_1 & =-\lambda [x_1 - \lambda \ln(E_{oc} - v_{fc})],\; x_1(0)= \lambda\log(E_{oc} - v_{fc}(0))\\
\lab{dotx}
\dot x_2 & =-\lambda [x_2 - \lambda \ln(i_{fc})],\; x_2(0)={\lambda}\ln({i_{fc}}(0)),
}	
with tuning gain the scalar $\lambda>0$. The following LRE holds:
\begin{equation}
\label{lrem3}
Y_{M3}(t)= \phi_{M3}(t) b,\quad \forall t \geq 0,
\end{equation}
where we defined the signals
\begali{
\lab{ym3}
Y_{M3} & :=x_1- \lambda \ln(E_{oc} - v_{fc})\\
\lab{phim3}
\phi_{M3} & := x_2 - \lambda \ln(i_{fc}).
}	
\end{lemma}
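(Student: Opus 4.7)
The starting point is to take logarithms on both sides of the relation $E_{oc}-v_{fc}=a(i_{fc})^b$, which is well-defined because the inequality \eqref{eoc2} guarantees the argument is positive. This yields the linear identity
\begali*{
\ln(E_{oc}-v_{fc})=\ln(a)+b\,\ln(i_{fc}),
}
in the pair of unknowns $(\ln(a),b)$. The whole job of the dynamic extension \eqref{dotx} is to annihilate the constant term $\ln(a)$, so that only $b$ remains---paralleling the use of the ``washout'' filter $-\frac{\lambda p}{p+\lambda}$ used in Section \ref{sec2}.

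My plan is to expose this washout structure at the state-space level, avoiding any explicit operator-theoretic manipulation. Denote $w_1:=\ln(E_{oc}-v_{fc})$ and $w_2:=\ln(i_{fc})$, so that $w_1=\ln(a)+b\,w_2$. Then \eqref{dotx} reads $\dot x_1=-\lambda x_1+\lambda^2 w_1$ and $\dot x_2=-\lambda x_2+\lambda^2 w_2$. Introduce the auxiliary signal $z:=x_1-b\,x_2$. Substituting $w_1=\ln(a)+b w_2$ and collecting terms, one obtains the scalar linear ODE
\begali*{
\dot z=-\lambda z+\lambda^2\ln(a),
}
whose unique constant equilibrium is $z=\lambda\ln(a)$.

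The key observation---and the only delicate point---is that the initial conditions prescribed in the statement are designed precisely so that $z(0)=\lambda\ln(a)$, and hence $z(t)\equiv\lambda\ln(a)$ for all $t\ge 0$ (no exponentially decaying transient). Indeed, evaluating at $t=0$ gives $z(0)=\lambda\,w_1(0)-b\lambda\,w_2(0)$, and invoking the algebraic identity $w_1(0)=\ln(a)+b\,w_2(0)$ collapses this to $\lambda\ln(a)$, as desired. This is what makes the LRE hold \emph{exactly} (rather than modulo the $\et$ term appearing in Lemma \ref{lem1}).

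To close the argument, I simply form the difference $Y_{M3}-b\,\phi_{M3}$ using the definitions \eqref{ym3}--\eqref{phim3}, and substitute $w_1=\ln(a)+b\,w_2$:
\begali*{
Y_{M3}-b\,\phi_{M3}
&=(x_1-\lambda w_1)-b(x_2-\lambda w_2)\\
&=(x_1-b\,x_2)-\lambda\bigl[\ln(a)+b\,w_2\bigr]+\lambda b\,w_2\\
&=z-\lambda\ln(a)=0,
}
which is exactly the claim \eqref{lrem3}. The main thing to double-check in the write-up is the correctness of the cancellation of initial conditions---the rest is straightforward bookkeeping; no signal-excitation or monotonicity considerations are needed here because the parameterization is already linear and scalar in $b$.
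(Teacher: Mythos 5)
Your proof is correct, and it reaches the conclusion by a genuinely different route than the paper. The paper's proof differentiates the relation $E_{oc}-v_{fc}=a(i_{fc})^b$ in time to get $\dot z = b\,(\dot i_{fc}/i_{fc})\,z$, passes to $\dot w = b\,\dot m$ with $w=\ln(E_{oc}-v_{fc})$, $m=\ln(i_{fc})$, applies the filter $\lambda/(p+\lambda)$ at the operator level to obtain ${\lambda p \over p+\lambda}(w)=b\,{\lambda p \over p+\lambda}(m)$, and then simply asserts that \eqref{dotx} is a state-space realization of this filtered identity; the role of the specific initial conditions (removing the decaying transient $\et$) is only addressed afterwards, in the remark {\bf D7}. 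You instead take logarithms of the static relation directly, obtaining the affine identity $w_1=\ln(a)+b\,w_2$, and run a purely state-space invariance argument: the auxiliary signal $z=x_1-bx_2$ obeys $\dot z=-\lambda z+\lambda^2\ln(a)$ and starts exactly at its equilibrium $\lambda\ln(a)$ by the prescribed initial conditions, whence $Y_{M3}-b\,\phi_{M3}\equiv 0$. Your version buys two things: it avoids differentiating the measured signals (the paper's derivation formally uses $\dot i_{fc}$, even though the final realization does not), and it makes the exactness claim ``$\forall t\geq 0$'' self-contained rather than deferred to a remark. What it gives up is the paper's conceptual framing--the washout filter annihilating the constant--being exhibited explicitly as a transfer-function identity, though you note this parallel yourself. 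One small point in your favor: you correctly invoke \eqref{eoc2} to justify positivity of $E_{oc}-v_{fc}$ for model {\bf M3}, whereas the lemma statement cites \eqref{eoc1}, which is evidently a typo in the paper; you should also state explicitly the standing assumption $i_{fc}(t)>0$ needed for $\ln(i_{fc})$ to be defined, which both your argument and the paper's use silently.
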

\begin{proof}
Define the signal
$$
z:=E_{oc}-v_{fc},
$$
whose differentiation, using \eqref{m3}, yields
\begalis{
\dot z & = a b (i_{fc})^{b-1} \dot{\arc {i_{fc}}} = b {\dot{\arc {i_{fc}}}\over i_{fc}}\left[a(i_{fc})^b\right]  = b {\dot{\arc {i_{fc}}}\over i_{fc}}z,
}
where we used again  \eqref{m3} to get the third identity. From the equation above we get 
$$
{\dot z \over z} = b {\dot{\arc {i_{fc}}}\over i_{fc}}.
$$
Defining now the signals
\begalis{
w & := \ln(z), \;m  := \ln(i_{fc}),
}
which satisfy
$$
\dot w = b \dot m.
$$
Applying to this equation the filter \eqref{calf} yields
\begequ
\lab{filsig}
{\lambda p\over p+\lambda}(w)=b{\lambda p\over p+\lambda}(m).
\endequ
The proof is completed noting that \eqref{dotx} is a state-space realization of \eqref{ym3}, \eqref{phim3}, \eqref{filsig}.
\end{proof}
\subsection{Reduced Model M4}
\lab{subsec33}
%
It is clear from \eqref{m4} that in this case it is trivial to derive a LRE as follows:
\begin{lemma}\em
\lab{lem6}
Consider the model of the fuel cell voltage-current static characteristic given in \eqref{m4}. The following LRE holds
\begin{equation}
\label{lrem4}
Y_{M4}(t)= \phi_{M4}^\top(t) \begmat{\theta_6 \\ \theta_1\\ \theta_2},\quad \forall t \geq 0,
\end{equation}
where we defined 
\begalis{
Y_{M4} &:=v_{fc}\\
\phi_{M4} &:=\begmat{1 \\ \ln(i_{fc})\\i_{fc}}.
}	
\end{lemma}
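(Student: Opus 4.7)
The claim for model \textbf{M4} is essentially a bookkeeping observation, so my plan is simply to exhibit the regression by direct inspection of \eqref{m4} and verify that the definitions of $Y_{M4}$ and $\phi_{M4}$ match. The key point is that, unlike models \textbf{M1}--\textbf{M3}, the parameterization \eqref{m4} is already affine in the unknown vector $\col(\theta_6,\theta_1,\theta_2)$: the coefficient of $\theta_6$ is $1$, the coefficient of $\theta_1$ is $\ln(i_{fc})$, and the coefficient of $\theta_2$ is $i_{fc}$, all of which are measurable functions of the available signal $i_{fc}$.

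The first and only substantive step is therefore to set $Y_{M4}:=v_{fc}$ (which is measurable by assumption) and collect the three coefficients into $\phi_{M4}:=\col(1,\ln(i_{fc}),i_{fc})\in\rea^3$. Then \eqref{m4} can be written as
\begin{equation*}
v_{fc} \;=\; 1\cdot\theta_6 + \ln(i_{fc})\cdot\theta_1 + i_{fc}\cdot\theta_2 \;=\; \phi_{M4}^\top\col(\theta_6,\theta_1,\theta_2),
\end{equation*}
which is exactly \eqref{lrem4}. No dynamic extension, LTI filtering, exponentially decaying perturbation $\et$, or algebraic manipulation is required, in contrast with Lemmas~\ref{lem1}, \ref{lem4} and \ref{lem5}.

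Since there is no nontrivial nonlinearity to disentangle and no derivative of $i_{fc}$ to reconstruct, I do not anticipate any obstacle: the result holds pointwise in $t$ directly from \eqref{m4}, for all $t\geq 0$ at which $i_{fc}(t)>0$ (so that $\ln(i_{fc})$ is well defined, which is implicitly assumed throughout the paper since the same $\ln(i_{fc})$ term appears already in \textbf{M1} via \eqref{vact}). The only thing worth flagging in the write-up is that, because the regression is linear in the parameters, the LSD machinery is not needed and a standard gradient or least-squares estimator \cite{SASBODbook,TAObook} suffices, as already announced in step \textbf{S2} of the design procedure.
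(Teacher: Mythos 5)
Your proposal is correct and coincides with the paper's treatment: the paper states Lemma~\ref{lem6} without proof, remarking only that the parameterization \eqref{m4} is already linear in $\col(\theta_6,\theta_1,\theta_2)$, which is exactly the direct-inspection argument you spell out. Your added caveat that $i_{fc}(t)>0$ is needed for $\ln(i_{fc})$ to be well defined is a reasonable (and implicitly assumed) remark, not a deviation from the paper.
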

\subsection{Discussion}
\lab{subsec34}

\noindent {\bf D5}  The simplicity of the proof of Lemma \ref{lem4} for the model {\bf M2} can hardly be overestimated. A potential drawback of this result is the fact that that the regressor $\phi_{M2}$ of the LRE \eqref{lrem2} contains a {\em constant} term that may jeopardize the excitation requirement. However,  it is possible to use the LSD estimator of \cite{ORTROMARA} and impose a simple IE condition preserving the exponential converge property. Since the simulation results with the gradient were satisfactory we will stick to the gradient estimator avoiding the additional complication of the LSD estimator.\\

\noindent {\bf D6} The proof of Lemma \ref{lem5} for the model {\bf M3} is also extremely simple. The only difficulty been that we can estimate with the LRE \eqref{lrem3} only $b$. However, we can use the relation \eqref{m3} to estimate---in a certainty-equivalent way---the parameter $a$ as follows
$$
{\hat a}=(E_{oc}-v_{fc})\big(i_{fc}\big)^{-\hat b}.
$$

\noindent {\bf D7} In \eqref{dotx} we have selected the initial conditions of the state equations in such a way as to avoid an additive exponentially decaying term $\et$ in the regression \eqref{lrem3}. Although, as discussed in Footnote 2, this term does not affect the asymptotic behavior of the estimator, it affects the transient.  
%
\section{The LSD Estimator and Gradient Estimators}
\lab{sec4}
%
For the sake of completeness in this section we present the LSD estimator of \cite{ORTROMARA} and the classical gradient scheme \cite{SASBODbook,TAObook}. They are given in a generic style to be applied to NLPRE of the form 
\begequ
\lab{lsd}
Y=\psi^\top \calw(\eta),
\endequ 
with $Y(t)\in \rea$, $\psi(t) \in \rea^p$, $\eta \in \rea^q$ and $\calw:\rea^q \to \rea^p$, or the LRE 
\begequ
\lab{gra}
Y=\psi^\top \eta,
\endequ 
with $p=q$.
\subsection{The LSD parameter estimator}
\lab{subsec41}
%
We present in this subsection the LSD  estimator for the NLPLRE \eqref{lsd}. The proof of the proposition below is given in \cite[Proposition 1]{ORTROMARA}, therefore it is omitted here.\\

For the LSD estimator we need to impose the following IE assumption \cite{TAObook} on the regressor  $\phi$.\\

\noindent {\bf A2} The regressor vector $\phi$ given in \eqref{lsd} is IE. That is, {there exist} constants $C_c>0$ and $t_c>0$ such that
$$
\int_0^{t_c} \phi(s) \phi^\top(s)  ds \ge C_c I_p.
$$

\begin{proposition}\em
\lab{pro2}
Consider the NLPRE \eqref{lsd}, with the regressor vector $\phi(t)$ satisfying Assumption {\bf A2}. Define the LSD estimator
\begalis{
\dot{\hat \calw} & =\gamma_0 F \phi (Y-\phi^\top {\hat \calw} ),\; \hat\calw(0)=\calw_{0} \in \rea^{p}\\
\dot {F}& =  -\gamma_0 F \phi  \phi^\top  F,\; F(0)={1 \over f_0} I_p \\
\dot{\hat \eta} & =  \Gamma  \Delta T [\caly -\Delta \calw(\hat \eta) ],\; \hat \eta(0)=\eta_0 \in \rea^q,
}	
where the mapping $\calw(\eta)$ satisfies the LMI
$$
T\nabla \calw(\eta) + [\nabla \calw(\eta)]^\top T^\top \geq \rho I_q >0,
$$
for some $T \in \rea^{q \times p}$ and we defined the signals
\begalis{
\Delta & :=\det\{I_p-f_0F\} \in \rea\\
\caly & := \adj\{I_p- f_0F\} (\hat\eta -  f_0F \eta_{0}) \in \rea^{p},
}
where $ \adj\{\cdot\}$ denotes the adjugate matrix, and the tuning gains are the scalars  $f_0>0$, $\gamma_0>0$ and the positive definite, diagonal matrix $\Gamma$. For all initial conditions $\calw_{0} \in \rea^p$ and $\eta_{0} \in \rea^q$ we have that
\begequ
\lab{expcon}
\liminf |\tilde \eta(t)|=0,\; (exp),
\endequ
 with all signals {\em bounded}.,
\end{proposition}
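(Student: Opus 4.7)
The plan is to decouple the proof into two cascaded pieces: first the linear least-squares subsystem that estimates the lifted parameter $\calw$, and then the nonlinear mixing/inversion stage that recovers $\eta$. The bridge between them is an exact algebraic identity of the form $\caly=\Delta\,\calw(\eta)$ that makes the second stage effectively a static inverse problem driven by a scalar gain $\Delta(t)$.

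First I would study the two LS-type equations. Defining the lifted error $\tilde{\calw}:=\hat{\calw}-\calw(\eta)$ and using $\frac{d}{dt}F^{-1}=\gamma_0\,\phi\phi^\top$, a direct computation yields $\frac{d}{dt}(F^{-1}\tilde{\calw})=0$, so that for every $t\ge 0$
$$
\tilde{\calw}(t)=f_0\,F(t)\bigl(\calw_0-\calw(\eta)\bigr), \qquad (I_p-f_0 F)\calw(\eta)=\hat{\calw}-f_0 F\,\calw_0.
$$
Pre-multiplying by $\adj\{I_p-f_0 F\}$ and using $\adj\{M\}M=\det\{M\}\,I_p$ produces the scalar-coefficient mixing identity $\caly=\Delta\,\calw(\eta)$ (reading the $\hat\eta,\eta_0$ appearing inside the definition of $\caly$ as $\hat{\calw},\calw_0$, which I would flag as a typo in the statement). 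At this point the right-hand side of the third equation becomes $\Gamma\,\Delta\,T\bigl[\Delta\,\calw(\eta)-\Delta\,\calw(\hat\eta)\bigr]$, so the error dynamics collapse to
$$
\dot{\tilde\eta}=-\Gamma\,\Delta^2(t)\,T\bigl[\calw(\hat\eta)-\calw(\eta)\bigr].
$$

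Next I would run a Lyapunov argument with $V=\tfrac{1}{2}\tilde\eta^\top\Gamma^{-1}\tilde\eta$. Using the mean-value representation $\calw(\hat\eta)-\calw(\eta)=\bigl(\int_0^1\nabla\calw(\eta+s\tilde\eta)\,ds\bigr)\tilde\eta$, the pointwise LMI $T\nabla\calw+(\nabla\calw)^\top T^\top\ge\rho I_q$ integrates along the segment to give $\tilde\eta^\top T\bigl[\calw(\hat\eta)-\calw(\eta)\bigr]\ge\tfrac{\rho}{2}|\tilde\eta|^2$. Hence $\dot V\le -\tfrac{\rho}{2}\,\Delta^2(t)\,|\tilde\eta|^2\le 0$, which already delivers monotone non-increase of $V$ and therefore boundedness of $\hat\eta$ for all $t\ge 0$; boundedness of $\hat{\calw}$ and $F$ follows from the explicit formula for $\tilde{\calw}$ above and from the fact that $F^{-1}(t)=f_0 I_p+\gamma_0\int_0^t\phi\phi^\top ds$ is monotone non-decreasing.

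It remains to upgrade asymptotic decay to the exponential rate by exploiting Assumption~{\bf A2}. From the monotone non-decreasing character of $F^{-1}$ and the IE bound one obtains $F^{-1}(t)\ge (f_0+\gamma_0 C_c)I_p$ for all $t\ge t_c$, so that $I_p-f_0 F(t)\ge \tfrac{\gamma_0 C_c}{f_0+\gamma_0 C_c}I_p$ and consequently
$$
\Delta(t)=\det\{I_p-f_0F(t)\}\ \ge\ \Bigl(\tfrac{\gamma_0 C_c}{f_0+\gamma_0 C_c}\Bigr)^{p}=:\Delta^\star>0,\quad\forall t\ge t_c.
$$
Plugging this uniform lower bound into the Lyapunov inequality yields $\dot V\le -\rho(\Delta^\star)^2 V/\lambda_{\max}(\Gamma^{-1})$ for $t\ge t_c$, which gives $|\tilde\eta(t)|$ converging to zero exponentially. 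The single subtle step—and the one I would expect to be the main obstacle—is precisely this passage from the pointwise LMI on $\nabla\calw$ to the integrated strong-monotonicity bound along the line segment joining $\eta$ and $\hat\eta$, combined with the bookkeeping of the exponentially decaying filter transients $\et$ mentioned in Footnote~2, which must be absorbed into the Lyapunov estimate without destroying the sign of $\dot V$.
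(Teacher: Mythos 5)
Your proof is correct and follows essentially the same route as the proof the paper relies on (it is deferred to \cite{ORTROMARA}, Proposition 1): the invariant $F^{-1}\tilde\calw$ yielding the mixed scalar identity $\caly=\Delta\,\calw(\eta)$, the strong-monotonicity Lyapunov bound obtained by integrating the LMI along the segment joining $\eta$ and $\hat\eta$, and the IE-based lower bound $\Delta(t)\geq\Delta^\star>0$ for $t\geq t_c$ that upgrades decay to an exponential rate. You also correctly flagged the typo in the statement---in the definition of $\caly$ the pair $(\hat\eta,\eta_0)$ should read $(\hat\calw,\calw_0)$, as dimensional consistency requires---while your closing worry about absorbing the $\et$ transients is extraneous here, since the proposition is stated for the exact NLPRE \eqref{lsd} and that robustness issue is handled separately via \cite{ARAetalmicnon}.
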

\subsection{Gradient parameter estimator}
\lab{subsec42}
%
We present in this section the classical gradient estimator for the LRE \eqref{gra}. The proof of the proposition below is very standard and may be found in any book of adaptive control, {\em e.g.} \cite[Theorem 2.5.3]{SASBODbook}, therefore it is omitted here.

For the gradient estimator we need to impose the following assumption \cite{TAObook} on the regressor  $\psi$.\\

\noindent {\bf A3} The regressor $\psi$  is {\em persistently exciting}, that is, there exists $T>0$ and $\rho>0$ such that
$$
\int_t^{t+T} \psi(s) \psi^\top(s) ds = \rho I_q, \; \forall t \geq 0,
$$
 
\begin{proposition}\em
\lab{pro3}
Consider the LRE \eqref{gra} with $\psi$ verifying Assumption {\bf A3}, and the gradient estimator
\begalis{
\dot {\hat \eta} & =-\Gamma \psi (\psi^\top {\hat \eta}  -Y),
}
with $\Gamma \in \rea^{q \times q}$ positive definite. Under these conditions, the parameter estimation errors verify \eqref{expcon}.
\end{proposition}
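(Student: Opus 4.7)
The plan is to proceed by the standard Lyapunov argument for linear regression gradient schemes, which is why this proof is indeed omitted in the paper itself. I will first derive the error dynamics and then exploit Assumption \textbf{A3} to conclude global exponential convergence.

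First I would introduce the estimation error $\tilde\eta := \hat\eta - \eta$. Substituting the LRE \eqref{gra} into the estimator dynamics gives
\begalis{
\dot{\tilde\eta} = \dot{\hat\eta} = -\Gamma \psi(\psi^\top\hat\eta - \psi^\top\eta) = -\Gamma\psi\psi^\top\tilde\eta,
}
which is a linear time-varying (LTV) system driven by the rank-one positive semidefinite matrix $\Gamma\psi\psi^\top$. Consider the quadratic Lyapunov function $V(\tilde\eta) := \tilde\eta^\top \Gamma^{-1}\tilde\eta$, which is positive definite because $\Gamma$ was chosen positive definite. Direct differentiation yields
\begalis{
\dot V = -2\tilde\eta^\top \psi\psi^\top \tilde\eta = -2|\psi^\top\tilde\eta|^2 \le 0,
}
so $V$ is non-increasing, $\tilde\eta$ is bounded (hence all signals in the estimator are bounded), and $\psi^\top\tilde\eta \in \L2$.

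The main step, and the only non-trivial one, is to upgrade this mere stability into \emph{uniform exponential} convergence using the PE condition \textbf{A3}. I would invoke the classical argument that relates PE of $\psi$ to uniform complete observability of the pair $(\psi^\top, -\Gamma\psi\psi^\top)$ for the LTV error system---see for instance \cite[Theorem 2.5.3]{SASBODbook}. Concretely, the idea is to integrate $\dot V$ over any window of length $T$ and show that
\begalis{
V(\tilde\eta(t+T)) - V(\tilde\eta(t)) \le -2\int_t^{t+T}\tilde\eta^\top(s)\psi(s)\psi^\top(s)\tilde\eta(s)\,ds,
}
and then bound the right-hand side from above, using the PE lower bound and the slow variation of $\tilde\eta$ over $[t,t+T]$ (which is itself controlled by the boundedness of $\psi$ implied by the upper PE bound), by a quantity of the form $-\kappa\,V(\tilde\eta(t))$ for some $\kappa \in (0,1)$. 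This gives the discrete-time contraction $V(\tilde\eta(t+T)) \le (1-\kappa)V(\tilde\eta(t))$, which together with the monotonicity of $V$ on $[t,t+T]$ yields $V(\tilde\eta(t)) \le V(\tilde\eta(0)) e^{-\alpha t}$ with $\alpha = -\frac{\ln(1-\kappa)}{T}>0$. The exponential decay of $\tilde\eta$ in \eqref{expcon} follows immediately from the definition of $V$.

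The only delicate point is the bound on the variation of $\tilde\eta$ across the interval $[t,t+T]$, which requires $\psi \in \Linf$---this is automatic under Assumption \textbf{A3} since the PE definition with equality $\rho I_q$ implicitly constrains $\psi$ to be bounded, and in any case the regressors constructed in Lemmas \ref{lem4}--\ref{lem6} are built from bounded physical signals composed with $\ln(\cdot)$ of strictly positive quantities and stable filters, so boundedness is guaranteed in our application. Since every step after the derivation of $\dot V$ is textbook material \cite{SASBODbook,TAObook}, the proof is standard and therefore omitted from the paper.
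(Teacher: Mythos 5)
Your proposal is correct and is precisely the classical Lyapunov/PE contraction argument that the paper invokes by omission, citing \cite[Theorem 2.5.3]{SASBODbook} rather than writing the proof out. One small caveat: the integral condition in Assumption {\bf A3} does \emph{not} by itself imply $\psi \in \Linf$ (spikes of growing height and shrinking width can preserve the integral bound), so the regressor boundedness needed for the variation estimate over $[t,t+T]$ must be assumed separately---exactly as the cited textbook does, and as your fallback observation about the regressors of Lemmas \ref{lem4}--\ref{lem6} being built from bounded signals and stable filters correctly supplies.
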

%
\section{Simulation Results}
\lab{sec5}
%

Simulations are conducted to test the estimation algorithms based on the previously introduced models \textbf{M2}-\textbf{M4}. The simulated system is the Matlab block corresponding to a $1.26$kW PEMFC stack of the Simscape/Matlab electrical Library. It consists of 42 cells with an open-circuit voltage of $E_{oc}=42$V. Also, current and voltage are related in a static manner in this PEMFC block. Therefore, it does not exhibit the hysteresis behavior usual in a physical setting. Fig. \ref{sim_curvepemfc} shows the polarization curve of the block. For the presented simulation tests, the stack is excited with a pulse train varying from $10$A to $20$A at a frequency of $2$Hz. This yields output voltage changes from $30.006$V to $28.051$V---see Fig. \ref{sim_uy}. 

First, we assessed the performance of the algorithm derived from the LSD estimator for $\textbf{M2}$---see Prop. \ref{pro2} and Lemma \ref{lem3}, respectively. It was realized with the estimator constants: $\gamma_0$=24, $f_0=1\times10^{-5}$, $\gamma_1=3\times10^4$, and $\gamma_2=3\times10^4$. From Fig. \ref{sim_M2}, it can be observed the evolution of $\hat a$ and $\hat b$ obtained during the simulation. Note that $\hat a$ converges to $10.3136$ whereas $\hat b$ tends to $0.0151$.  Fig. \ref{sim_polcurves} depicts the estimated polarization curve plotted from those estimated values. It can be compared with the polarization curve of the PEMFC system, also included in the figure.

Second, simulation results of the LSD estimator based on $\textbf{M3}$---see Lemma \ref{lem4}---were performed. For this test, we set $\lambda=80$, $\lambda_a=7$, $\lambda_b=7$, $\gamma_0$=24, $f_0=1\times10^{-5}$, and $\gamma_1=600$. To visualize the convergence,  $\hat a$ and $\hat b$ are plotted as a function of time in Fig. \ref{sim_M3}. At the stop time of simulation $\hat a=7.2641$ and  $\hat b=0.2178$.  The estimated polarization curve obtained from those values is displayed in Fig. \ref{sim_polcurves}.    

Third, a simulation test was carried out to evaluate the performance of the  LSD estimator with $\textbf{M4}$---see Lemma \ref{lem5}. For this simulation we take $\gamma_0=24$, $f_0=1\times 10^{-6}$, $\gamma_1=30$, $\gamma_2=30$, and $\gamma_3=30$. Fig. \ref{sim_M4}  shows how the estimations of $\hat\theta_1$, $\hat\theta_2$ and $\hat\theta_6$ evolve during the simulation. These converge to the values: $\hat\theta_1={-1.9271}$, $\hat\theta_2={-0.0619}$, and $\hat\theta_6={35.0619}.$ As previously, the estimated curve resulting from this test is shown in Fig. \ref{sim_polcurves}. 

As a final remark, it can be observed from \ref{sim_polcurves} that \textbf{M2} represents appropriately the model for currents within the interval $u\in[10\mathrm{A}, 20\mathrm{A}]$, for $u<10$A and $u>20$ the difference is noticeable and increasing.
On the other hand, \textbf{M3} represents appropriately the model for the interval $u\in[1\mathrm{A}, 30\mathrm{A}]$. For currents out of that range there is a barely noticeable difference. To conclude, Model \textbf{M4} represents remarkably well the model in the entire operation range and exhibits the best performance.

\begin{figure*}[h!]
	\centering
	\begin{subfigure}[Polarization curve of the Matlab's PEMFC block.]
		{\includegraphics[width=0.49\linewidth]{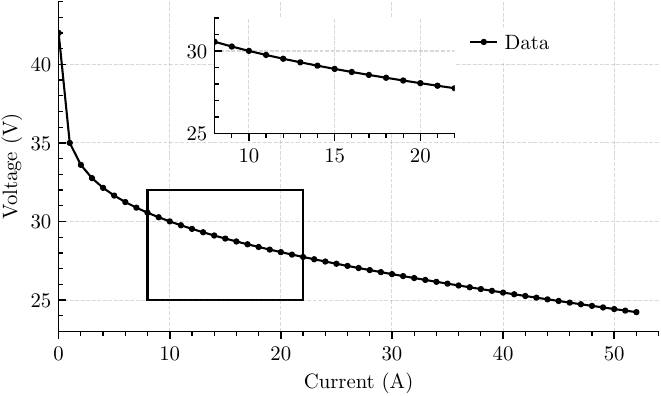}\label{sim_curvepemfc}}
	\end{subfigure}\hfill
	\begin{subfigure}[ Input current $u$ and output voltage $y$.]
		{\includegraphics[width=0.49\linewidth]{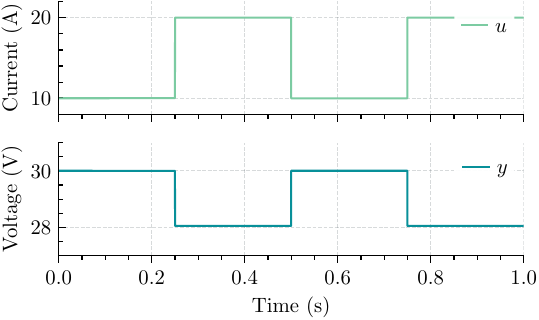}\label{sim_uy}}
	\end{subfigure}\hfill
	\begin{subfigure}[ Online estimation of $\hat a$ and $\hat b$ in \textbf{M2}.]
		{\includegraphics[width=0.49\linewidth]{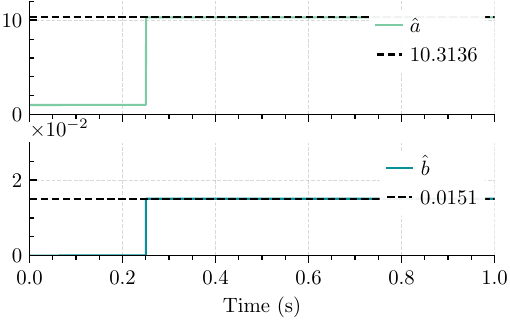}\label{sim_M2}}
	\end{subfigure}\hfill
	\begin{subfigure}[ Online estimation of $\hat a$ and $\hat b$ in \textbf{M3}.] 
		{\includegraphics[width=0.49\linewidth]{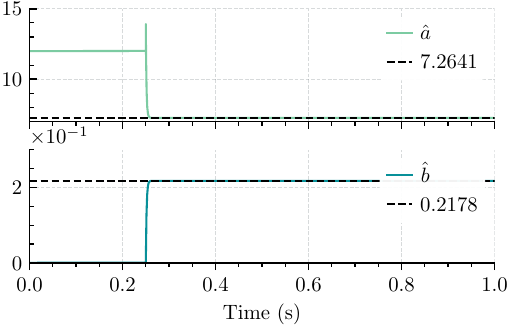}\label{sim_M3}}
	\end{subfigure}\hfill
	\begin{subfigure}[Online estimation of $\hat \theta_1$, $\hat \theta_2$ and $\hat \theta_6$ in \textbf{M4}.] 
		{\includegraphics[width=0.49\linewidth]{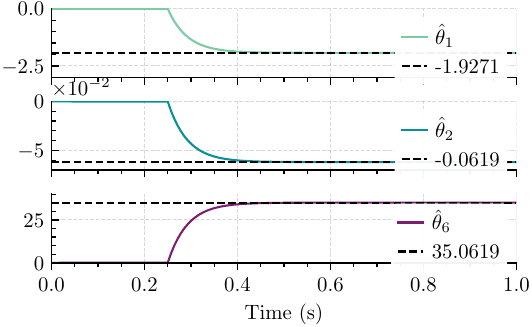}\label{sim_M4}}
	\end{subfigure}\hfill
	\begin{subfigure}[Polarization curve of the simulated PEMFC and the estimated polarization curves \textbf{M2}-\textbf{M4}. ] {\includegraphics[width=0.49\linewidth]{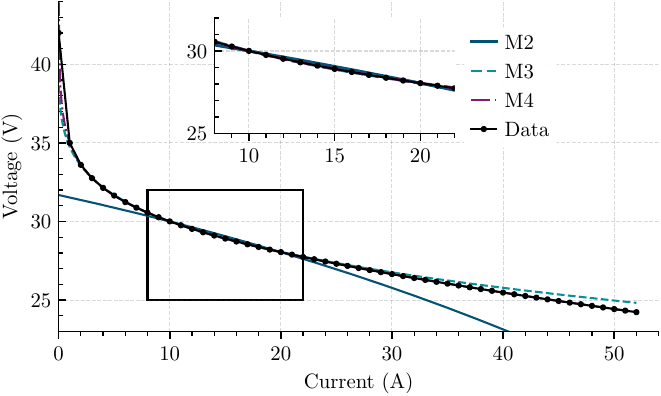}\label{sim_polcurves} }
	\end{subfigure}
	\caption{Simulation results of the online estimation. }\label{sim_results}
\end{figure*}

%
\section{Experimental Results}
\lab{sec6}
%
Experimental tests are carried out for experimental validation of the previously introduced LSD estimator for models \textbf{M2}-\textbf{M4} in a fully automated Nexa PEMFC stack. This PEMFC stack consists of 65 cells, a rated output power of 1.2kW, and a nominal output voltage of 26V. Moreover, the following equipment is used to perform such tests: an opamp-based data acquisition system for PEMFC current and voltage measurements, an electronic variable load, and the DSPACE DS1104 for real-time numerical computations. This test bench is depicted in Fig. \ref{testbench}.

\begin{figure}[b]
	\begin{center}
		\includegraphics[width=0.4\textwidth]{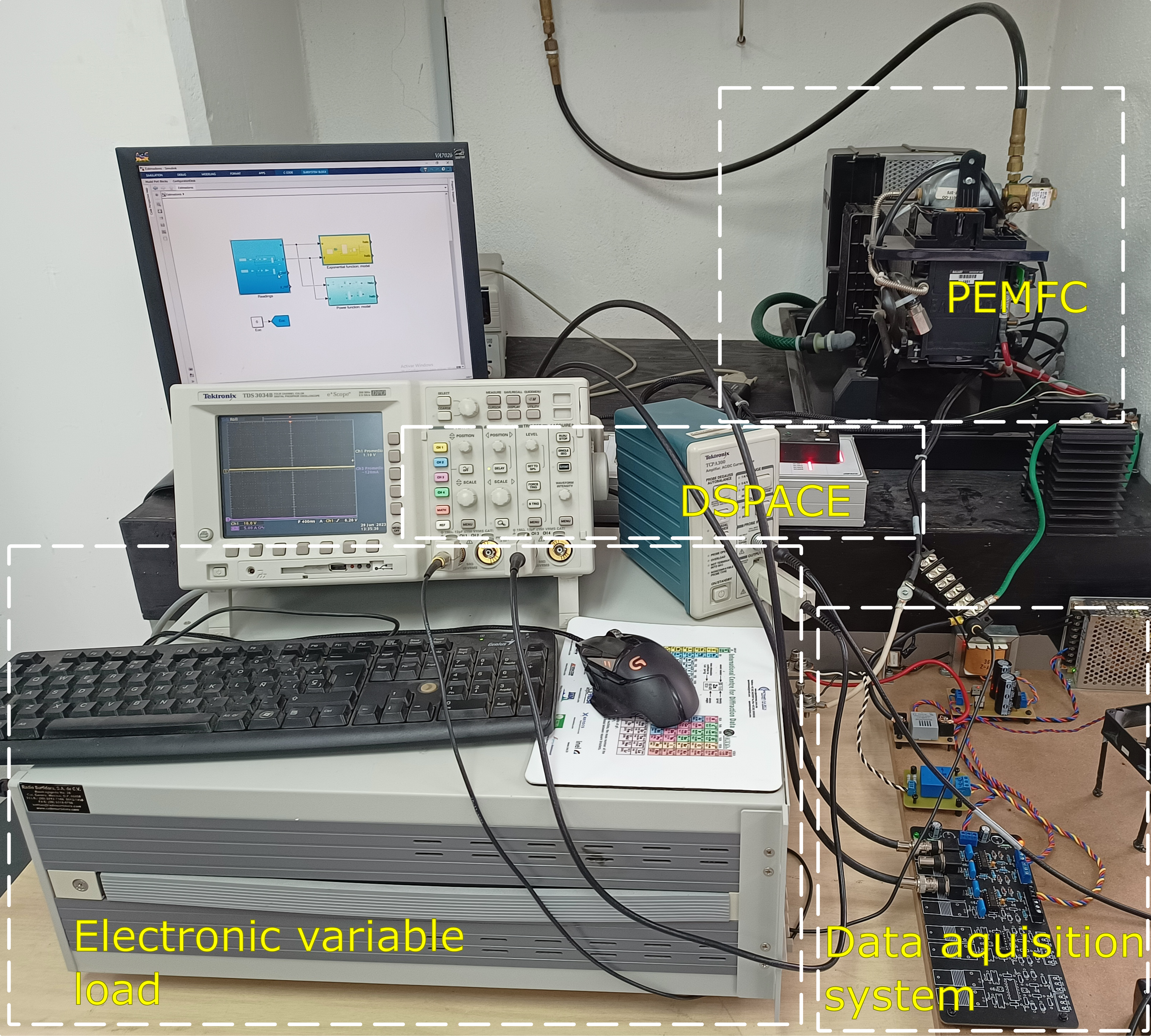}
		\caption{Test bench using 1.2 $kW$ Nexa PEMFC power module.}\label{testbench}
	\end{center}
\end{figure}

\noindent {The polarization curve of the fuel cell was experimentally obtained by driving the cell's current to several operation points and  then registering the  corresponding steady-state voltage for each point. The hysteresis phenomenon causes two different paths in the plot depending on whether the subsequent current operation points have a rise or fall in their value. One path in the hysteresis loop of Fig. \ref{exp_pemfc} is produced when the current is varied upward from $0$A to $30$A. The other path is produced when the current is varied downward  from $30$A to $0$A. These two paths have been averaged to obtain a new curve to which we refer in the sequel as the ``polarization curve''---see Fig. \ref{exp_pemfc}. Moreover, the open-circuit voltage of the stack, measured before testing, was $E_{oc}=39.8$V.

\subsection*{Approximation of the polarization curve using curve fitting}

\noindent As a starting point, we approximate the behavior of the polarization curve in Fig. \ref{exp_pemfc} by means of a curve fitting procedure applied to each one of the previously introduced models \textbf{M2}-\textbf{M4}. Fig. \ref{exp_fit} shows these approximations from computations using Matlab's Curve Fitting Toolbox \cite{CurveFitting}.  The obtained fitting parameters for each model are the following.  
For \textbf{M2}, $a=4.52$ and $b =0.0463$.
For \textbf{M3}, $ a =2.117$ and $b =0.5921$. 
And for \textbf{M4}, $\theta_1=-0.7984$, $\theta_2=-0.3709$, and $\theta_6=37.31$.

\begin{figure*}[t]
	\centering
		\includegraphics[width=0.5\linewidth]{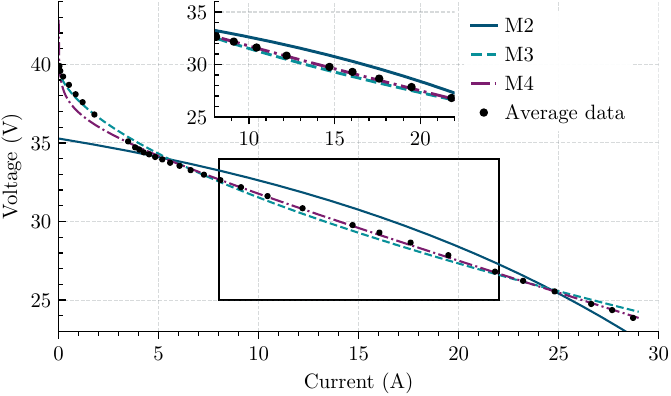}
	\caption{Polarization curve of the PEMFC system and the approximated polarization curves using curve fitting applied to \textbf{M2}-\textbf{M4}. }
	\label{exp_fit}
\end{figure*}

\subsection*{Experimental results of the online estimation}

We now present the results of the online estimation using the three proposed algorithms. The input current for the three experimental tests is a pulse train varying from $10\mathrm{A}$ to $20\mathrm{A}$ with a period of 3.3 s---see Fig. \ref{exp_uy}. To reduce the noise produced when sensing, high frequencies were filtered out by a low-pass filter from the input current and output voltage signals. Fig. \ref{exp_uy} shows each sensed signal and its filtered version used to run the estimation algorithms. All the tests below were carried out with the LSD estimator.

The first experimental test consisted in assessing the performance of the algorithm derived from $\textbf{M2}$. {This result was obtained under the following estimator gains: ${f_0=0.1}$,  $\gamma_0=5$, and $\Gamma=\mathrm{diag}(30,~30)$.}
Fig. \ref{exp_M2} shows $\hat a$ and $\hat b$ obtained during the simulation time.  Note from the figure that $\hat a$ converges to $5.3842$ and $\hat b$ to $0.0438$.   The estimated polarization curve plotted from those values is presented in Fig. \ref{exp_polcurves} together with the actual polarization curve of the PEMFC system.

Next, results of the estimation based on $\textbf{M3}$ were performed. 
Fig. \ref{exp_M3} displays signals $\hat a$ and $\hat b$ obtained as a result of this test.  We set ${f_0=0.1}$,  $\gamma_0=2.5$, and $\Gamma=6$.  Notice from the figure that the estimated values are $\hat a=2.1532$ and  $\hat b=0.5936$.  The  corresponding estimated polarization curve is displayed in Fig. \ref{exp_polcurves}.    

The final experimental results showing the estimates $\hat\theta_1$, $\hat\theta_2$ and $\hat\theta_6$ of  the LSD estimator for \textbf{M4} are shown in Fig. \ref{exp_M4}. {We fixed ${f_0=6\times 10^{-3}}$,  $\gamma_0=1.115$, and $\Gamma=\mathrm{diag}(0.5,0.5,0.5)$.}  
As it can be seen, $\hat\theta_1$ tends to $-1.2193$ whereas $\hat\theta_2$ tends to $-0.4233$ and $\hat\theta_6$ to $39.6861$. The estimated polarization curve resulting from this test is shown in Fig. \ref{exp_polcurves}. 

It can be observed from \ref{exp_polcurves} that, compared with the other two models, \textbf{M3} fits better the actual polarization curve in the entire operation interval. On the other hand, the reliability of \textbf{M2} is limited to the interval $u\in [5\mathrm{A},22\mathrm{A}]$. For current values out of that interval  the mismatch of \textbf{M2} with respect to the actual polarization curve is evident. On the other hand, \textbf{M4} approximate better for current levels above $7$A and for values close to $0$A, this model is clearly more reliable than \textbf{M2}. 

\begin{figure*}[h!]
	\centering
	\begin{subfigure}[Current-voltage curves of the PEMFC system: the hysteresis loop and the polarization (average) curve.]
		{\includegraphics[width=0.49\linewidth]{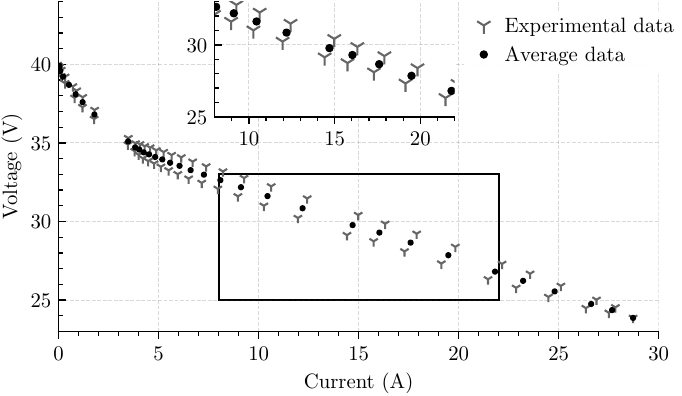}\label{exp_pemfc}}
	\end{subfigure}\hfill
	\begin{subfigure}[ Input current $u$ and its output voltage $y$.]
		{\includegraphics[width=0.49\linewidth]{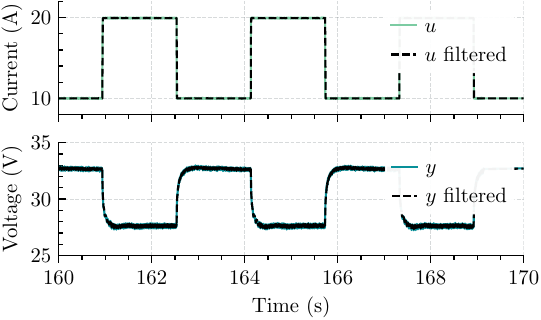}\label{exp_uy}}
	\end{subfigure}\hfill
	\begin{subfigure}[Online estimation of $\hat a$ and $\hat b$ in \textbf{M2}.]
		{\includegraphics[width=0.49\linewidth]{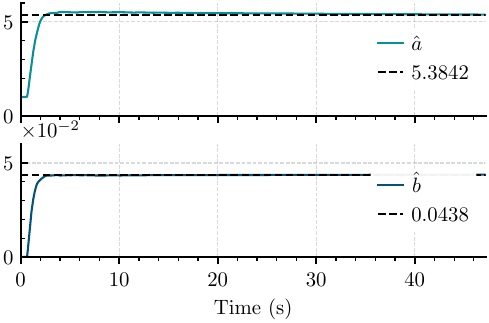}\label{exp_M2}}
	\end{subfigure}\hfill
	\begin{subfigure}[ Online estimation of $\hat a$ and $\hat b$ in \textbf{M3}.] 
		{\includegraphics[width=0.49\linewidth]{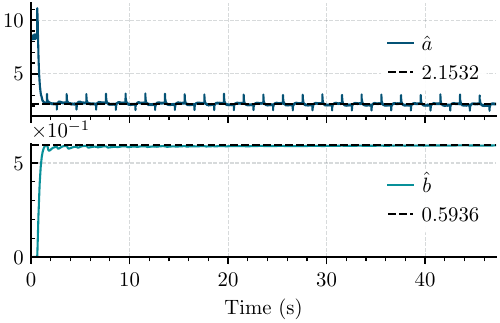}\label{exp_M3}}
	\end{subfigure}\hfill
	\begin{subfigure}[ Online estimation of $\hat \theta_1$, $\hat \theta_2$ and $\hat \theta_6$ in \textbf{M4}] 
		{\includegraphics[width=0.49\linewidth]{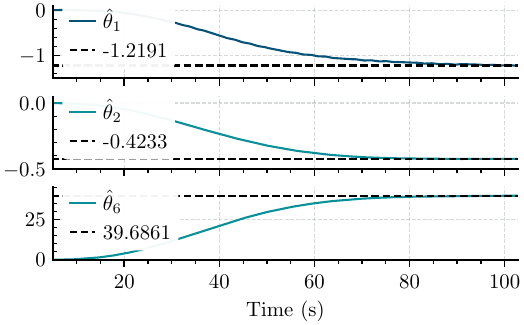}\label{exp_M4}}
	\end{subfigure}\hfill
	\begin{subfigure}[Polarization curve of the PEMFC system and the estimated polarization curves \textbf{M2}-\textbf{M4}. ] 
		{\includegraphics[width=0.49\linewidth]{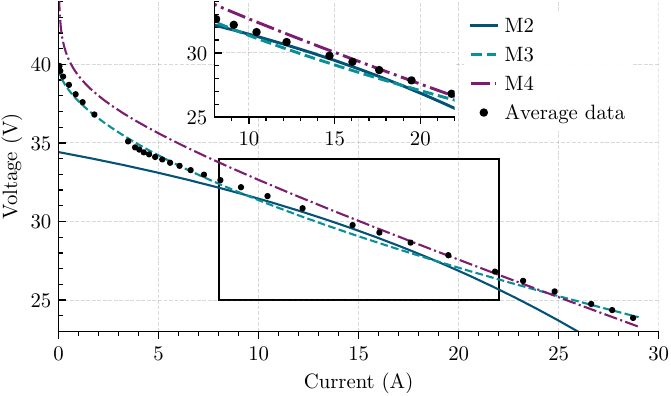}\label{exp_polcurves} }
	\end{subfigure}
	\caption{Experimental results of the online estimation. }\label{exp_results}
\end{figure*}

\section{Concluding Remarks and Future Research}
\lab{sec7}
%
We have presented the first solution to the problem of designing a GEC, on-line algorithm to estimate the parameters of the full model {\bf M1} of the polarization curve of a PEMFC that is described by a NSNLP. The design proceeds in three steps: (i) the derivation of a SNLP for the parameters $\theta_{1,4}$; (ii) the application of the LSD estimator to reconstruct these parameters from this SNLP; (iii) the application of simple algebraic operation to generate the estimate of $\theta_5$. It should be underscored that the use of the LSD estimator is necessary to deal with the non-linear nature of the regression equation (without overparameterization). Unfortunately, it is numerically shown that the elements of the SNLP regression are {\em linearly dependent}. Hence, the vector of unknown parameters is not-identifiable. It should be underscored that this negative result does not imply that the original NSNLP \eqref{yu1} is not identifiable.  

Besides the estimator for the full model,  we have presented extremely simple procedures to estimate the parameters of the approximate models {\bf M2}-{\bf M4}---which are widely used for control of PEMFC. In these cases, it is possible to derive a LRE from which the unknown parameters can be estimated with standard algorithms. 

Simulation and experimental results to evaluate the performance on the estimator of the full model in the face of parameter variations and noise have been reported. Also, a study has been carried out to assess the ability of the reduced models to approximate the behavior of the exact polarization curve.

 Current research is under way to use these estimator results in the context of adaptive control. Also, in the spirit of \cite{XINetal}, it is of interest to study the variations with temperature of the reversible potential, which is assumed constant in this paper. Some preliminary results on this topic have been reported in \cite{ORTetalwc23}.  
    

%
\appendices
\section{Case of $i_{fc}(t)=A \sin(\omega t)$ with $A$ and $\omega$ Unknown}
\lab{appa}
%
In this appendix we consider the case when the current takes the form
$$
u(t)=A \sin(\omega t),
$$
with $A>0$ and $\omega>0$ unknown, which is of interest for off-line estimation. We will show below that, in this case, it is possible to remove the assumption of known $\dot u$. Instrumental to obtain this result is the Swapping Lemma  \cite[Lemma 3.6.5]{SASBODbook}.

Towards this end, we recall equation \eqref{nlpre1}, that we rewrite here for ease of reference:
\begalis{
Y & =\theta_3 {\lambda \over p+\lambda}(\dot u y)- \theta_1\theta_3{\lambda \over p+\lambda}(\dot u \ln(u))+ \theta_1 {\lambda p \over p+\lambda}(\ln(u))+ \theta_2{\lambda p \over p+\lambda}(u)  - \theta_3\theta_4 {\lambda p \over  p+\lambda}(u) -{\theta_2\theta_3 \over 2} {\lambda p\over p+\lambda}(u^2).
}
We concentrate our attention on the terms depending on $\dot u$. First, notice that 
$$
{\lambda \over p+\lambda}(\dot u \ln(u))={\lambda p\over p+\lambda}(u \ln(u))-{\lambda p\over p+\lambda}(u).
$$
Second, note that we can invoke the Swapping Lemma to write 
\begalis{
{\lambda \over p+\lambda}(\dot u y)&=\dot u{\lambda \over p+\lambda}(y)-{\lambda \over p+\lambda}\left(\ddot u{1 \over p+\lambda}(y)\right)\\
&=\dot u{\lambda \over p+\lambda}(y)+{\lambda \over p+\lambda}\left(\omega^2 u{1 \over p+\lambda}(y)\right),
}
where we have used the fact that $\ddot u = - \omega^2 u$, in the second identity. Replacing these two identities in the NLPRE above we get
\begalis{
Y= & \theta_3 \dot u{\lambda \over p+\lambda}(y)+\theta_3\omega^2{\lambda \over p+\lambda}\left(u{1 \over p+\lambda}(y)\right)- \theta_1\theta_3\left[{\lambda p\over p+\lambda}(u \ln(u))-{\lambda p\over p+\lambda}(u)\right]\\
& + \theta_1 {\lambda p \over p+\lambda}(\ln(u))+ \theta_2{\lambda p \over p+\lambda}(u) - \theta_3\theta_4 {\lambda p \over  p+\lambda}(u)-{\theta_2\theta_3, \over 2} {\lambda p\over p+\lambda}(u^2)\\
 = &\theta_3 \left[\dot u{\lambda \over p+\lambda}(y)\right]+\begmat{ \calg^\top(\theta_{1,4}) & \theta_6}\xi,
}
with\footnote{Notice the, unlike the map $\calw(\theta_{1,4})$, the new map $\calg(\theta_{1,4})$ does not contain the term $\theta_3$ alone.}
\begequ
\lab{xi}
\calg(\theta_{1,4}):=\begmat{\theta_1 \\ \theta_2 \\ \theta_3 \theta_4 \\ \theta_1\theta_3 \\ \theta_2\theta_3} \in \rea^5,\;\xi:=\begmat{ {\lambda p \over p+\lambda}(\ln(u)) \\ {\lambda p \over p+\lambda}(u) \\ - {\lambda p \over  p+\lambda}(u)\\ -{\lambda p\over p+\lambda}(u \ln(u))+{\lambda p\over p+\lambda}(u)  \\ -{1 \over 2} {\lambda p\over p+\lambda}(u^2)\\{\lambda \over p+\lambda}\left(u{1 \over p+\lambda}(y)\right)
}\in \rea^6
\endequ
where we defined $\theta_6:=\theta_3 \omega^2$. Let us now look at the term $\dot u{\lambda \over p+\lambda}(y)$, to which we apply the filter $\calf(p)$ and invoke the Swapping Lemma to get
\begalis{
{\lambda \over p+\lambda}\left(\dot u{\lambda \over p+\lambda}(y)\right)&=\dot u {\lambda^2 \over (p+\lambda)^2}(y) - {\lambda \over p+\lambda}\left(\ddot u {\lambda \over (p+\lambda)^2}(y) \right)\\ 
& =\dot u {\lambda^2 \over (p+\lambda)^2}(y) + \omega^2{\lambda \over p+\lambda}\left( u {\lambda \over (p+\lambda)^2}(y) \right).
}
We now proceed to apply the filter $\calf(p)$ to $Y$ above yielding
$$
\calf(p)(Y) =\theta_3 \left[\dot u {\lambda^2 \over (p+\lambda)^2}(y)\right]+\theta_6 \left[ {\lambda \over p+\lambda}\left( u {\lambda \over (p+\lambda)^2}(y) \right) \right] +\begmat{ \calg^\top(\theta_{1,4}) & \theta_6}\calf(p)(\xi).
$$
To eliminate the term that depends on $\dot u$ we define the signal
\begalis{
Z &:=Y {\lambda^2 \over (p+\lambda)^2}(y)-\calf(p)(Y){\lambda \over p+\lambda}(y) \\
& = \begmat{ \calg^\top(\theta_{1,4}) & \theta_6}{\lambda^2 \over (p+\lambda)^2}(y)\xi- \begmat{ \calg^\top(\theta_{1,4}) & \theta_6}{\lambda \over p+\lambda}(y)\calf(p)(\xi) -\theta_6 \left[ {\lambda \over p+\lambda}\left( u {\lambda \over (p+\lambda)^2}(y) \right){\lambda \over p+\lambda}(y) \right]\\
& = \begmat{ \calg^\top(\theta_{1,4}) & \theta_6 }\chi,
}
where we defined the vector
$$
\chi:={\lambda^2 \over (p+\lambda)^2}(y)\xi-{\lambda \over p+\lambda}(y)\calf(p)(\xi) -{\bf e}_6{\lambda \over p+\lambda}\left( u {\lambda \over (p+\lambda)^2}(y) \right){\lambda \over p+\lambda}(y)
$$
with ${\bf e}_6:=\col(0,0,0,0,0,1)\in \rea^6$.

It is easy to show that the map $\begmat{ \calg^\top(\theta_{1,4}) & \theta_6}:\rea^5 \to \rea^6$ is monotonizable. Therefore, equipped with the SNLP $Z(t)=\begmat{ \calg^\top(\theta_{1,4}) & \theta_6}\chi(t)$ it is possible, using the LSD estimator, to estimate the parameters $\theta_{1,4}$ and $\theta_6$. 
%
\section{List of Acronyms}
\lab{appb}
%
{
\begin{table}[h]
	\centering
	\label{tab:2}
	\renewcommand\arraystretch{1.6}
	\begin{tabular}{l|r}
		\hline\hline
		DREM  &  Dynamic regressor extension and mixing \\		
        GEC & Global exponential convergence\\
		IE & Interval excitation \\	
		LRE & Linear regression equation \\		
		LSD & Least-squares plus DREM\\
		LTI &  Linear time-invariant  \\
		 SNLP  & Separable onlinear parameterization \\
	    NSNLP  & Nonseparable nonlinear parameterization \\
		PEMFC &  Proton exchange membrane fuel cell \\
		\hline\hline
	\end{tabular}
\end{table}
}
\end{document}